\newcommand{\dpolylog}{DPolylogTime\xspace}
\newcommand{\polylog}{\mathrm{DPolylogTime}}
\newcommand{\npolylog}{NPolylogTime\xspace}
\newcommand{\mnpolylog}{\mathrm{NPolylogTime}}
\newcommand{\dtime}[1]{\mathrm{DTIME}(#1)}
\newcommand{\ntime}[1]{\mathrm{NTIME}(#1)}
\newcommand{\atime}[2]{\mathrm{ATIME}(#1,#2)}
\newcommand{\atimeop}[2]{\mathrm{ATIME}^{\mathit{op}}(#1,#2)}
\newcommand{\mA}{\mathbf{A}}
\newcommand{\mB}{\mathbf{B}}
\newcommand{\mS}{\mathbf{S}}
\newcommand{\initial}[1]{\mathrm{InitialZeros}^{#1}}
\newcommand{\conseq}[1]{\mathrm{ConseqZeros}^{#1}}
\newcommand{\noconseq}[1]{\mathrm{NoConseqZeros}^{#1}}
\newcommand{\exactlyonce}[1]{\mathrm{ExactlyOnce}^{#1}}
\newcommand{\genconseq}[2]{\mathrm{AtLeastBlocks}^{#1}_{#2}}
\newcommand{\genexactly}[2]{\mathrm{ExactlyBlocks}^{#1}_{#2}}
\newcommand{\soplog}{\mathrm{SO}^{\mathit{plog}}}
\newcommand{\sigmaplog}[1]{\Sigma^\mathit{plog}_{#1}}
\newcommand{\piplog}[1]{\Pi^\mathit{plog}_{#1}}
\newcommand{\polysigma}[1]{\tilde{\Sigma}_{#1}^{\mathit{plog}}}
\newcommand{\polypi}[1]{\tilde{\Pi}_{#1}^{\mathit{plog}}}
\newcommand{\ttx}{{\tt x}}
\newcommand{\tty}{{\tt y}}
\newcommand{\polylogspace}{\mathrm{PolylogSpace}}
\begin{document}

\title{Completeness in Polylogarithmic Time and Space\thanks{The research reported in this paper results from the project {\em Higher-Order Logics and Structures} supported by the Austrian Science Fund (FWF: \textbf{[I2420-N31]}). It has also been partly supported by the Austrian Ministry for Transport, Innovation and Technology, the Federal Ministry for Digital and Economic Affairs, and the Province of Upper Austria in the frame of the COMET center SCCH.}}



\author{Flavio Ferrarotti \and Sen\'{e}n Gonz\'{a}lez \and Klaus-Dieter Schewe \and Jos\'{e} Mar\'{\i}a Turull-Torres}


\institute{F. Ferrarotti, S. Gonz\'{a}lez \at
              Software Competence Center Hagenberg, Austria\\
              \email{\{flavio.ferrarotti,senen.gonzalez\}@scch.at}
           \and
           K.-D. Schewe \at
              Zhejiang University, UIUC Institute, Haining, China\\
              \email{kd.schewe@intl.zju.edu.cn}
		   \and 
		   J. M. Turull-Torres \at
		   	Universidad Nacional de La Matanza, Buenos Aires, Argentina\\
			\email{jturull@unlam.edu.ar}
}

\date{Received: date / Accepted: date}

\maketitle

\begin{abstract}
Complexity theory can be viewed as the study of the relationship between computation and applications, understood the former as complexity classes and the latter as problems. Completeness results are clearly central to that view. Many natural algorithms resulting from current applications have polylogarithmic time (PolylogTime) or space complexity (PolylogSpace). The classical Karp notion of complete problem however does not plays well with these complexity classes. It is well known that PolylogSpace does not have complete problems under logarithmic space many-one reductions. In this paper we show similar results for deterministic and non-deterministic PolylogTime as well as for every other level of the polylogarithmic time hierarchy. We achieve that  by following a different strategy based on proving the existence of proper hierarchies of problems inside each class. We then develop an alternative notion of completeness inspired by the concept of uniformity from circuit complexity and prove the existence of a (uniformly) complete problem for PolylogSpace under this new notion. As a consequence of this result we get that complete problems can still play an important role in the study of the interrelationship between polylogarithmic and other classical complexity classes.

\end{abstract}

\section{Introduction}

The complexity theory of polylogarithmic time and space computations has not received as much attention as we believe it deserves. This is the case despite the fact that such computations appear rather naturally. Take for instance dynamic graph connectivity algorithms~\cite{HolmLT98,KapronKM13}, algorithms for updates in the minimum cut problem~\cite{GoranciHT18}, algorithms for maintaining a dynamic family of sequences under equality tests~\cite{MehlhornSU94}, and distance and point-set algorithms in computational geometry~\cite{Chan10,KapoorS96,Smid92,Supowit90}, among many others. Indeed, from a complexity theory perspective the main antecedent that we can think of is the work on constant-depth quasi-polynomial size AND/OR-circuits in~\cite{barrington:sct1992} where it was proven that the class of Boolean queries computable by the class of $\mathrm{DTIME}[(\log n)^{O(1)}]$ DCL-uniform families of Boolean circuits of unbounded fan-in, size $2^{{(\log n)}^{O(1)}}$ and depth $O(1)$ coincides with the class of Boolean queries expressible in a restricted fragment of  second-order logic. The complexity class $\mathrm{DTIME}[2^{(\log n)^{O(1)}}]$ is known as {\em quasipolynomial time}. Interstingly, the fastest known algorithm for checking graph isomorphisms is in quasipolynomial time \cite{babai:stoc2016}. 

In \cite{FerrarottiGST18} we started a deeper investigation of the descriptive complexity of sublinear time computations emphasising  complexity classes DPolylogTime and NPolylogTime of decision problems that can be solved deterministically or non-deterministically with a time complexity in $O(\log^k n)$ for some $k$, where $n$ is as usual the size of the input. We extended these complexity classes to a complete hierarchy, the {\em polylogarithmic time hierarchy}, analogous to the polynomial time hierarchy, and for each class $\Sigma_m^{plog}$ or $\Pi_m^{plog}$ ($m \in \mathbb{N}$) in the hierarchy we defined a fragment of semantically restricted second-order logic capturing it \cite{FerrarottiGST19}. While the hierarchy as a whole captures the same class of problems studied in~\cite{barrington:sct1992}, the various classes of the hierarchy provide fine-grained insights into the nature of describing problems decidable in sublinear time. Moreover, in~\cite{FerrarottiGTBV19,FerrarottiGTBV19b} we introduced a novel two-sorted logic that separates the elements
of the input domain from the bit positions needed to address
these elements, proving that the inflationary and partial fixed point variants of that logic  capture DPolylogTime and polylogarithmic space (PolylogSpace), respectively. 

We share the view of complexity as the intricate and exquisite interplay
between computation (complexity classes) and applications (that is, problems)~\cite{Papa07}. Logics are central to this approach since they excel in expressing and capturing computation, but so are completeness results. Following the capture of the main complexity classes of plylogarithmic time and space by means of natural logics, then the obvious question is whether there are complete problems in these complexity classes and what would be an appropriate notion of reduction to define them. In principle, it is well known that PolylogSpace does not have complete problems under logarithmic space many-one reductions, i.e., under the classical Karp notion of completeness. As explained among others in~\cite{Johnson90}, this is due to the space hierarchy theorem by Hartmanis et al.~\cite{HartmanisLS65}.  

Our fisrt attempt to address this problem appeared in~\cite{FerrarottiGST20}. Following an approach inspired by our work in the descriptive complexity of polylogarithmic time, we discovered that there exist proper hierarchies of problems inside each of the classes $\polysigma{m}$ and $\polypi{m}$. As rather straightforward consequence of these proper hierarchies we get that for none of the classes $\polysigma{m}$ and $\polypi{m}$ ($m \in \mathbb{N}$) in the polylogarithmic time hierarchy there exists a complete problem in the classical sense of Karp reductions, i.e., not even under polynomial-time many-one reductions. Moreover, we show that the same holds for DPolylogTime. Note that an approach based instead in the time hierarchy theorem of  Hartmanis et al. cannot be applied to the case of polylogarithmic time, since that hierarchy theorem requires at least linear time. This contrasts with the case of PolylogSpace where the  space hierarchy theorem of  Hartmanis et al. can indeed be applied to show a similar result. 

The results in this paper confirm the ones that appear in the conference version~\cite{FerrarottiGST20} regarding the absence of classical complete problems under Karp reductions inside the different polylogarithmic time complexity classes. Here we abstract from the descriptive complexity considerations made in~\cite{FerrarottiGST20} and concentrate in structural complexity. These results together with the similar well known result regarding PolylogSpace (see~\cite{Johnson90} among others) would in principle mean that these classes are somehow less robust than their polynomial time and space counterparts. In this paper we show that this is not necessarily the case. By considering an alternative notion of completeness, we show that we can still isolate the most difficult problems inside PolylogSpace and draw standard conclusions of the kind entailed by the classical notion of completeness. 

Our alternative notion of completeness (and hardness) is grounded in the concept of uniformity borrowed from circuit complexity theory (see~\cite{Immerman99} and~\cite{BalcazarDG90} among others), hence we call it \emph{uniform completeness}. The intuitive idea is to consider a countably infinite family of problems instead of a single global problem. Each problem in the family corresponding to a fragment of a same global problem determined by a positive integer parameter. Such problem is uniformly complete for a given complexity class if there is a transducer Turing machine which given a positive integer as input builds a Turing machine in the required complexity class that decides the fragment of the problem corresponding to this parameter. In the specific case of PolylogSpace studied here, the machine is a direct-access Turing machine as defined in~\cite{FerrarottiGTBV19b,FerrarottiGTBV19} and the parameter is the exponent affecting the logarithmic function in the space upper bound of any given machine in the class. We choose to use direct-access Turing machines instead of random-access or standard Turing machines simply because we find them easier to work with and are nevertheless equivalent for the problem at hand (see Corollary~1 and Proposition~1 in~\cite{FerrarottiGTBV19b}).

The remainder of this paper is organized as follows. Section~\ref{complexityClasses} summarizes the necessary preliminaries regarding polylogarithmic time complexity classes and its fundamental model of computation based in random-access Turing machines. Section~\ref{section:properhierarchies} is devoted to prove the existence of proper hierarchies of problems in DPolylogTime, NPolylogTime and each additional level of the polylogarithmic time hierarchy. Then the non-existence of classical complete problems for these classes under Karp reductions arises as a rather straightforward consequence, as we show in Section \ref{sec:complete}. At this point we need to introduce some additional preliminaries for our research on an alternative notion of (uniform) completeness. This is done in Sections~\ref{datm} and~\ref{section:polylogspace} were we define and discuss direct-access Turing machines and the $\polylogspace$ complexity class, respectively. We introduce our new, alternative notion of (uniform) completeness in Section~\ref{sec:ucompl}, where we also discuss why this notion is relevant. In Section~\ref{S-plQSAT-C} we define a problem that is uniformly complete for $\mathrm{PolylogSpace}$ as proven in Section~\ref{SS-plQSAT-C}. We conclude with a brief summary in Section \ref{sec:schluss}.

%
%
%
%
%
%
%
%

\section{Polylogarithmic Time}\label{complexityClasses}


The sequential access that Turing machines have to their tapes makes it impossible to compute anything in sublinear time. Therefore, logarithmic time complexity classes are usually studied using models of computation that have random access to their input. As this also applies to the poly-logarithmic complexity classes studied in this paper, we adopt a Turing machine model that has a \emph{random access} read-only input, similar to the log-time Turing machine in~\cite{barrington:jcss1990}.

In the following, $\log n$ always refers to the binary logarithm of $n$, i.e., $\log_2 n$. With $\log^k n$ we mean $(\log n)^k$.

A \emph{random-access Turing machine} is a multi-tape Turing machine with (1) a read-only (random access) \emph{input} of length $n+1$, (2) a fixed number of read-write \emph{working tapes}, and (3) a read-write input \emph{address-tape} of length $\lceil \log n \rceil$.

Every cell of the input as well as every cell of the address-tape contains either $0$ or $1$ with the only exception of the ($n+1$)st cell of the input, which is assumed to contain the endmark $\triangleleft$. In each step the binary number in the address-tape either defines the cell of the input that is read or if this number exceeds $n$, then the ($n+1$)st cell containing $\triangleleft$ is read.   

\begin{example}\label{ex:machine}
Let polylogCNFSAT be the class of satisfiable propositional formulae in conjunctive normal form with $c \leq \lceil \log n \rceil^k$ clauses, where $n$ is the length of the formula. Note that the formulae in polylogCNFSAT tend to have few clauses and many literals. We define a random-access Turing machine $M$ which decides polylogCNFSAT. The alphabet of $M$ is $\{0,1,\#,+,-\}$. The input formula is encoded in the input tape as a list of $c \leq \lceil \log n \rceil^k$ indices, each index being a binary number of length $\lceil \log n \rceil$, followed by $c$ clauses. For every $1 \leq i \leq c$, the $i$-th index points to the first position in the $i$-th clause. Clauses start with $\#$ and are followed by a list of literals. Positive literals start with a $+$, negative with a $-$. The $+$ or $-$ symbol of a literal is followed by the ID of the variable in binary. $M$ proceeds as follows: (1) Using binary search with the aid of the ``out of range'' response $\triangleleft$, compute $n$ and $\lceil \log n \rceil$. (2) Copy the indices to a working tape, counting the number of indices (clauses) $c$. (3) Non-deterministically guess $c$ input addresses $a_1, \ldots, a_c$, i.e., guess $c$ binary numbers of length $\lceil \log n \rceil$. (4) Using $c$ $1$-bit flags, check that each $a_1, \ldots, a_c$ address falls in the range of a different clause. (5) Check that each $a_1, \ldots, a_c$ address points to an input symbol $+$ or $-$. (6) Copy the literals pointed by $a_1, \ldots, a_c$ to a working tape, checking that there are \emph{no} complementary literals. (7) Accept if all checks hold.
\end{example}

Let $L$ be a language accepted by a random-access Turing machine $M$. Assume that for some function $f$ on the natural numbers, $M$ makes at most $O(f(n))$ steps before accepting an input of length $n$. If $M$ is deterministic, then we write $L \in \dtime{f(n)}$. If $M$ is non-deterministic, then we write $L \in \ntime{f(n)}$. We define the classes of deterministic and non-deterministic poly-logarithmic time computable problems as follows:
\[\polylog = \bigcup_{k, c \in \mathbb{N}} \dtime{(\log n)^k \cdot c}\]
\[\mnpolylog = \bigcup_{k, c \in \mathbb{N}} \ntime{(\log n)^k \cdot c} \]
The non-deterministic random-access Turing machine in Example~\ref{ex:machine} clearly works in polylog-time. Therefore, polylogCNFSAT $\in \mnpolylog$.

Recall that an alternating Turing machine comes with a set of states $Q$ that is partitioned into subset $Q_\exists$ and $Q_\forall$ of so-called existential and universal states. Then a configuration $c$ is accepting iff
\begin{itemize}

\item $c$ is in a final accepting state,

\item $c$ is in an existential state and there exists a next accepting configuration, or

\item $c$ is in a universal state, there exists a next configuration and all next configurations are accepting.

\end{itemize}

In analogy to our definition above we can define a \emph{random-access alternating Turing machine}. The languages accepted by such a machine $M$, which starts in an existential state and makes at most $O(f(n))$ steps before accepting an input of length $n$ with at most $m$ alternations between existential and universal states, define the complexity class $\atime{f(n)}{m}$. Analogously, we define the complexity class $\atimeop{f(n)}{m}$ comprising languages that are accepted by a random-access alternating Turing machine that starts in a universal state and makes at most $O(f(n))$ steps before accepting an input of length $n$ with at most $m-1$ alternations between universal and existential states. With this we define
\[\polysigma{m} = \bigcup_{k, c \in \mathbb{N}} \mathrm{ATIME}[(\log n)^k \cdot c, m] \qquad \polypi{m} = \bigcup_{k,c \in \mathbb{N}} \mathrm{ATIME}^{op}[(\log n)^k \cdot c, m] . \]

The poly-logarithmic time hierarchy is then defined as $\mathrm{PLH} = \bigcup_{m \ge 1} \polysigma{m}$. Note that $\polysigma{1} = \mnpolylog$ holds. 

\begin{remark}

Note that a simulation of a $\mnpolylog$ Turing machine $M$ by a deterministic machine $N$ requires checking all computations in the tree of computations of $M$. As $M$ works in time $({\log n})^{O(1)}$, $N$ requires time $2^{{\log n}^{O(1)}}$. This implies $\mnpolylog \subseteq \mathrm{DTIME}(2^{{\log n}^{O(1)}})$, which is the complexity class called quasipolynomial time of the fastest known algorithm for graph isomorphism \cite{babai:stoc2016}, which further equals the class  
$\mathrm{DTIME}({n^{{\log n}^{O(1)}}})$\footnote{This relationship appears quite natural in view of the well known relationship $\mathrm{NP} = \mathrm{NTIME}(n^{O(1)}) \subseteq \mathrm{DTIME}(2^{{n}^{O(1)}}) = \mathrm{EXPTIME}$.}.
\end{remark}

\section{Proper Hierarchies in Polylogarithmic Time}\label{section:properhierarchies}

In this section we show that there are proper infinite hierarchy of problems inside each of the relevant polylogarithmic time complexity classes. We prove these facts constructively by means of problems consisting on deciding simple languages of binary strings. Our first results confirms that there is indeed a strict infinite hierarchy of problems in DPolylogTime.

\begin{theorem}\label{strictDet}
For every $k > 1$, $\dtime{\log^k n} \subsetneq \dtime{\log^{k+1} n}$.
\end{theorem}
\begin{proof}
Let $\initial{k}$ be the problem of deciding the language of all binary strings which have a prefix of at least $\lceil\log n\rceil^k$ consecutive zeros, where $n$ is the length of the string. 
For the upper bound, note that a random-access Turing machine can clearly check whether the first $\lceil \log n \rceil^{k+1}$ bits in the input-tape are $1$ by working in deterministic time $O(\log^{k+1} n)$. Thus $\initial{k+1} \in \dtime{\log^{k+1}n}$. 

Regarding the lower bound, we show that $\initial{k+1}$ does \emph{not} belong to $\dtime{\log^{k}n}$. 
Let us assume for the sake of contradiction that there is a deterministic random-access Turing machine $M$ that decides $\initial{k+1}$ in time $\lceil \log n \rceil^k \cdot c$, for some constant $c \geq 1$. Take a string $s$ of the form $0^n$ such that $\lceil \log n \rceil^{k+1} >  \lceil \log n \rceil^k \cdot c$. Since the running time of $M$ on input $s$ is strictly less than $\lceil\log n\rceil^{k+1}$, then there must be at least one position $i$ among the first $\lceil\log n\rceil^{k+1}$ cells in the input tape that was not read in the computation of $M(s)$. Define a string $s' = 0^i10^{n-i-1}$. Clearly, the output of the computations of $M(s)$ and $M(s')$ are identical. This contradicts the assumption that $M$ decides $\initial{k+1}$, since it is not true that the first $\lceil\log n\rceil^{k+1}$ bits of $s'$ are $0$.    
\qed
\end{proof}

Our second hierarchy theorem shows that there is also a strict hierarchy of problem inside \npolylog.

\begin{theorem}\label{strictNonD}
    For every $k > 1$,  $\ntime{\log^k n} \subsetneq \ntime{\log^{k+1} n}$.
\end{theorem}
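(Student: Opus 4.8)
The plan is to reuse the separating language from the proof of Theorem~\ref{strictDet}, namely $\initial{k+1}$, the set of binary strings that begin with at least $\lceil\log n\rceil^{k+1}$ consecutive zeros, where $n$ is the length of the string, and to adapt the adversary argument to the non-deterministic model. The inclusion $\ntime{\log^k n}\subseteq\ntime{\log^{k+1}n}$ is immediate since $\log^k n=O(\log^{k+1}n)$, and the upper bound $\initial{k+1}\in\ntime{\log^{k+1}n}$ follows at once from the deterministic upper bound already established in the proof of Theorem~\ref{strictDet}, because a deterministic random-access Turing machine is a special case of a non-deterministic one. Hence the entire content of the theorem is the separation $\initial{k+1}\notin\ntime{\log^k n}$.

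For the lower bound I would argue by contradiction. Suppose a non-deterministic random-access Turing machine $M$ decides $\initial{k+1}$ and, by the time bound, has, on every accepted input of length $n$, an accepting computation of length at most $\lceil\log n\rceil^k\cdot c$ for some constant $c\ge 1$. Fix $n$ large enough that simultaneously $\lceil\log n\rceil>c$ (so that $\lceil\log n\rceil^{k+1}>\lceil\log n\rceil^k\cdot c$) and $n\ge\lceil\log n\rceil^{k+1}$. Then $s=0^n$ lies in $\initial{k+1}$, so $M$ has an accepting computation $\rho$ on $s$ of length at most $\lceil\log n\rceil^k\cdot c<\lceil\log n\rceil^{k+1}$. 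Since $M$ reads at most one cell of its random-access input per step, $\rho$ queries fewer than $\lceil\log n\rceil^{k+1}\le n$ distinct input positions, so some position $i$ with $0\le i<\lceil\log n\rceil^{k+1}$ is never read along $\rho$. Put $s'=0^i 1 0^{\,n-i-1}$; it has length $n$ and agrees with $s$ on every cell read by $\rho$, so $\rho$ is verbatim a valid accepting computation of $M$ on $s'$, whence $M$ accepts $s'$. But the $i$-th bit of $s'$ is $1$ with $i<\lceil\log n\rceil^{k+1}$, so $s'\notin\initial{k+1}$, contradicting that $M$ decides $\initial{k+1}$.

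I expect the only genuinely delicate point — and the reason the argument is set up in exactly this direction — to be that in the non-deterministic model the perturbation must start from an input \emph{inside} the language: membership guarantees the existence of a single short accepting branch, which is what we transplant onto the perturbed string. The mirror-image attempt, starting from an input outside the language, would require simultaneous control over \emph{all} computation branches and does not go through. Everything else (the choice of $n$, the one-cell-per-step read bound, and the membership $\initial{k+1}\in\ntime{\log^{k+1}n}$) is routine and parallels the proof of Theorem~\ref{strictDet}.
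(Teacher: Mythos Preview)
Your proof is correct. The paper takes a slightly different route by using a different separating language, $\conseq{k+1}$ (strings containing a block of $\lceil\log n\rceil^{k+1}$ consecutive zeros \emph{anywhere}), rather than reusing $\initial{k+1}$. For the upper bound the paper genuinely exploits nondeterminism by guessing the starting position of the zero block, whereas you simply inherit the upper bound from the deterministic case; for the lower bound both arguments are structurally the same adversary argument---fix a short accepting branch on a positive instance, flip an unread bit among the first $\lceil\log n\rceil^{k+1}$ positions, and transplant the branch onto the resulting negative instance. Your choice is more economical for the theorem at hand; the paper's choice has the incidental advantage that $\conseq{k}$ (and its complement) become the building blocks for the problems $\exactlyonce{k}$, $\genconseq{k}{l}$, $\genexactly{k}{l}$ used in the separations for the higher levels $\polysigma{m}$ and $\polypi{m}$ (Theorems~\ref{hierachyInPi0}--\ref{ThForPi}), where a language whose upper bound genuinely requires guessing a position is needed.
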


\begin{proof}
Let $\conseq{k}$ denote the problem of deciding the language of binary strings which have at least $\lceil\log n\rceil^k$ consecutive bits set to $0$, where $n$ is the length of the string. 
For the upper bound we show that $\conseq{k+1}$ is in $\ntime{\log^{k+1}n}$. A random-access Turing machine $M$ can  guess (non-deterministically) a position $i$ in the input tape in time $O(\log n)$ by simply guessing $\lceil\log n\rceil$ bits and writing them in the address-tape. Then $M$ can check (working deterministically) in time $O(log^{k+1} n)$ whether each cell of the input tape between positions $i$ and $i+log^{k+1} n$ has a $0$. 

Regarding the lower bound, we need to show that $\conseq{k+1}$ is \emph{not} in $\ntime{\log^{k}n}$.
Let us assume for the sake of contradiction that there is a nondeterministic random-access Turing machine $M$ that decides $\conseq{k+1}$ in time $\lceil \log n \rceil^k \cdot c$, for some constant $c \geq 1$. Take a binary string $s$ of the form $0^{\lceil \log n \rceil^{k+1}}1^{n-\lceil \log n \rceil^{k+1}}$ such that $\lceil \log n \rceil^{k+1} >  \lceil \log n \rceil^k \cdot c$. Since $M$ accepts $s$, then there is at least one computation $\rho$ of $M$ which accepts $s$ in at most $\lceil \log n \rceil^k \cdot c$ steps. Then there must be at least one position $i$ among the first $\lceil \log n \rceil^{k+1}$ cells in the input tape that was not read during computation $\rho$. Define a string $s' = 0^{i}10^{\lceil \log n \rceil^{k+1}-i-1}1^{n-\lceil \log n \rceil^{k+1}}$. Clearly, the accepting computation $\rho$ of $M(s)$ is also an accepting computation of $M(s')$. This contradicts the assumption that $M$ decides $\conseq{k+1}$, since it is not true that there are $\lceil \log n \rceil^{k+1}$ consecutive zeros in $s'$. 
\qed
\end{proof}

Regarding the complement of $\npolylog$, the following theorem shows that there is a strict hierarchy of problems inside the first level of the $\tilde{\Pi}_m^{\mathit{plog}}$ hierarchy. 

\begin{theorem}\label{hierachyInPi0}
    For every $k > 1$,  $\atimeop{\log^k n}{1} \subsetneq \atimeop{\log^{k+1} n}{1}$.
\end{theorem}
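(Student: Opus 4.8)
The plan is to obtain the statement as the ``dual'' of Theorem~\ref{strictNonD}, exploiting that a machine witnessing membership in $\atimeop{g(n)}{1}$ starts in a universal state and performs $0$ alternations, hence is purely universal: it accepts an input iff \emph{every} branch accepts, equivalently it rejects iff \emph{some} branch rejects. Interchanging universal with existential states and accepting with rejecting ones then shows that, for every $g$, $\atimeop{g(n)}{1}$ is exactly the class of complements of languages in $\ntime{g(n)}$ (adopting the usual convention that a time-bounded alternating machine halts within its clock on all branches). So it suffices to exhibit, for each $k>1$, a language separating $\atimeop{\log^k n}{1}$ from $\atimeop{\log^{k+1}n}{1}$.

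The witness is $\noconseq{k+1}$, the set of binary strings of length $n$ in which no $\lceil\log n\rceil^{k+1}$ consecutive positions all carry $0$; this is precisely the complement of the language $\conseq{k+1}$ from Theorem~\ref{strictNonD}. The upper bound $\noconseq{k+1}\in\atimeop{\log^{k+1}n}{1}$ then follows from the observation above together with $\conseq{k+1}\in\ntime{\log^{k+1}n}$ (Theorem~\ref{strictNonD}); alternatively one builds the machine directly, universally guessing a position $i$ by writing $\lceil\log n\rceil$ universally chosen bits on the address tape and then, as in the proof of Theorem~\ref{strictNonD}, scanning the $\lceil\log n\rceil^{k+1}$ input cells starting at position $i$ in deterministic time $O(\log^{k+1}n)$, accepting that branch iff the window runs past the endmark $\triangleleft$ or contains a $1$. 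All branches accept iff every length-$\lceil\log n\rceil^{k+1}$ window contains a $1$, i.e. iff the input lies in $\noconseq{k+1}$, and the machine is purely universal with $0$ alternations.

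For the lower bound $\noconseq{k+1}\notin\atimeop{\log^k n}{1}$ I would rerun the fooling argument of Theorem~\ref{strictNonD} with ``accepting branch'' replaced by ``rejecting branch''. Suppose a purely universal random-access machine $M$ decides $\noconseq{k+1}$ with all branches halting within $\lceil\log n\rceil^k\cdot c$ steps, $c\ge 1$ constant, and fix $n$ large enough that $\ell:=\lceil\log n\rceil^{k+1}>\lceil\log n\rceil^k\cdot c$. The string $s=0^{\ell}1^{n-\ell}$ has $\ell$ consecutive $0$'s, so $s\notin\noconseq{k+1}$ and $M$ rejects $s$; therefore some branch $\rho$ of $M$ on $s$ is rejecting, and along $\rho$ the machine makes fewer than $\ell$ steps, hence fails to read some position $i$ with $0\le i<\ell$. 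Set $s'=0^{i}1\,0^{\ell-i-1}1^{n-\ell}$; its longest run of $0$'s is $\max(i,\ell-i-1)\le \ell-1<\ell$, so $s'\in\noconseq{k+1}$. As $s$ and $s'$ agree on every cell read along $\rho$, the branch $\rho$ is also a rejecting computation of $M$ on $s'$, so $M$ rejects $s'$, contradicting that $M$ decides $\noconseq{k+1}$. With the obvious inclusion $\atimeop{\log^k n}{1}\subseteq\atimeop{\log^{k+1}n}{1}$ this yields strictness.

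The step needing the most care is the semantics of purely universal machines: both the identification of $\atimeop{g(n)}{1}$ with the complements of $\ntime{g(n)}$ languages and, in the lower bound, the existence of a \emph{finite} rejecting branch of $M$ on an input outside $\noconseq{k+1}$ rely on the convention that a time-bounded alternating machine halts within its clock on all branches, which I would state explicitly. Everything else is routine: checking that inserting a single $1$ splits the zero-block of $s$ into two blocks each of length $<\ell$ (so $s'$ genuinely lies in $\noconseq{k+1}$), and that scanning $\lceil\log n\rceil^{k+1}$ consecutive input cells costs only $O(\log^{k+1}n)$ steps because consecutive address-tape increments have $O(1)$ amortized cost — exactly as already used in Theorem~\ref{strictNonD}.
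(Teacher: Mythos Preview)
Your proposal is correct and follows essentially the same approach as the paper: the separating language is $\noconseq{k+1}$, the upper bound is obtained by a purely universal machine that universally picks a starting position and deterministically scans a window of length $\lceil\log n\rceil^{k+1}$, and the lower bound is the fooling argument on $s=0^{\ell}1^{n-\ell}$ with an unread position flipped to $1$. The only presentational difference is your explicit duality observation identifying $\atimeop{g(n)}{1}$ with co-$\ntime{g(n)}$, which the paper leaves implicit; this is a helpful framing but does not change the argument.
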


\begin{proof}
Let $\noconseq{k}$ denote the problem of deciding the language of binary strings which do \emph{not} have greater than or equal $\lceil\log n\rceil^k$ consecutive bits set to $0$, where $n$ is the length of the string. For the upper bound we prove that  $\noconseq{k+1} \in \atimeop{\log^{k+1} n}{1}$. In a universal state, a random-access alternating Turing machine $M$ can check whether for all cell in some position $i$ in the input tape that is at distance at least $\lceil \log n \rceil^{k+1}$ from the end of the tape, there is a position between positions $i$ and $i+\lceil \log n \rceil^{k+1}$ with $1$. Each of those checking can be done deterministically in time $O(log^{k+1} n)$. Therefore this machine decides $\noconseq{k+1}$ in $\atimeop{\log^{k+1} n}{1}$.

Regarding the lower bound, we show that $\noconseq{k+1}$ is \emph{not} in $\atimeop{\log^k n}{1}$.
Let us assume for the sake of contradiction that there is an alternating random-access Turing machine $M$ that decides $\noconseq{k+1}$ using only universal states and in time $\lceil \log n \rceil^k \cdot c$, for some constant $c \geq 1$. Take a binary string $s$ of the form $0^{\lceil \log n \rceil^{k+1}}1^{n-\lceil \log n \rceil^{k+1}}$ such that $\lceil \log n \rceil^{k+1} > \lceil \log n \rceil^k \cdot c$. From our assumption that $M$ decides $\noconseq{k+1}$, we get that there is a rejecting computation $\rho$ of $M(s)$. Since every computation of $M$ which rejects $s$ must do so reading at most $\lceil \log n \rceil^k \cdot c$ cells, then there must be at least one position $i$ among the first $\lceil \log n \rceil^{k+1}$ cells in the input tape that was not read during computation $\rho$. Define a string $s' = 0^i10^{{\lceil \log n \rceil^{k+1}} - i - 1}1^{n-\lceil \log n \rceil^{k+1}}$. Clearly, the rejecting computation $\rho$ of $M(s)$ is also a rejecting computation of $M(s')$. This contradicts the assumption that $M$ decides $\noconseq{k+1}$, since $s'$ do not have $\lceil\log n\rceil^{k+1}$ consecutive bits set to $0$ and should then be accepted by all computations of $M$.   
\qed
\end{proof}

Next we show that there is also a strict hierarchy of problems inside the second level of the $\tilde{\Sigma}_m^{\mathit{plog}}$ hierarchy.

\begin{theorem}\label{Th:sigmatwo}
    For every $k > 1$,  $\atime{\log^k n}{2} \subsetneq \atime{\log^{k+1} n}{2}$.
\end{theorem}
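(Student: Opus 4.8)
My plan is to separate the two classes by exhibiting a single language, namely $\exactlyonce{k+1}$, in $\atime{\log^{k+1} n}{2}\setminus\atime{\log^k n}{2}$; since $\atime{\log^k n}{2}\subseteq\atime{\log^{k+1} n}{2}$ is immediate, this gives strictness. Here I let $\exactlyonce{k}$ be the language of binary strings $s$ of length $n$ in which the pattern $0^{\lceil\log n\rceil^k}$ occurs at \emph{exactly one} position, i.e.\ there is a unique $i$ with $s_i=s_{i+1}=\cdots=s_{i+\lceil\log n\rceil^k-1}=0$. The point of this choice is that ``there is at least one such occurrence'' is an $\exists\forall$ property that is cheap, whereas ``there is at most one'' forces a universal quantifier ranging over \emph{all} positions $j$, followed, for each bad $j$, by a search for a $1$ inside the window $[j,j+\lceil\log n\rceil^k-1]$; with only one alternation left after the existential guess, that search has to be carried out by a deterministic scan of $\lceil\log n\rceil^{k+1}$ cells, which is exactly what raises the time from $\log^{k}n$ to $\log^{k+1}n$.

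For the upper bound I would describe a random-access alternating machine that, in its existential phase, first computes $n$ and $\lceil\log n\rceil$ by binary search with the endmark as in Example~\ref{ex:machine}, guesses a position $i$ on the address tape in time $O(\log n)$, and then deterministically scans the $\lceil\log n\rceil^{k+1}$ cells starting at $i$, killing this branch unless all of them are $0$; it then makes one alternation to a universal state, guesses a position $j$, deterministically scans the $\lceil\log n\rceil^{k+1}$ cells starting at $j$, and accepts iff $j=i$, or that window runs past the endmark, or the window contains a $1$. This machine accepts exactly when $0^{\lceil\log n\rceil^{k+1}}$ occurs somewhere and every occurrence starts at the same position, i.e.\ iff $s\in\exactlyonce{k+1}$; every computation runs in time $O(\log^{k+1}n)$ with a single alternation, so $\exactlyonce{k+1}\in\atime{\log^{k+1} n}{2}$.

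For the lower bound I would assume a random-access alternating machine $M$ that starts existentially, uses one alternation, and decides $\exactlyonce{k+1}$ within $\lceil\log n\rceil^k\cdot c$ steps, fix $n$ so large that $m:=\lceil\log n\rceil^{k+1}>\lceil\log n\rceil^k\cdot c$ and $n\ge 3m$, and work with the instance $s=0^m1^{n-m}$, which lies in $\exactlyonce{k+1}$. Since $M$ accepts $s$, I would fix an existential strategy that reaches a universal configuration $C^\ast$ whose sub-tree is all-accepting; its existential phase reads a set $E$ of fewer than $m$ input cells, and every branch below $C^\ast$ also reads fewer than $m$ cells in total, hence none of them can by itself inspect a whole length-$m$ window. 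I would then construct a string $s'\notin\exactlyonce{k+1}$ on which $M$ still follows this existential strategy into $C^\ast$ and whose modification is invisible to the branches of the universal sub-tree that would otherwise react to it: either by turning a single $0$ of $[0,m-1]$ into a $1$, which destroys the unique occurrence, at a position that no relevant branch inspects, or by turning a length-$m$ all-$1$ window of the suffix into zeros, which creates a second occurrence, in a window no relevant branch enters. This yields $M$ accepting $s'$, contradicting correctness.

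The step I expect to be the real obstacle is exactly this last construction. In the deterministic, non-deterministic and co-non-deterministic settings of Theorems~\ref{strictDet}--\ref{hierachyInPi0} it was enough to flip one cell that the \emph{single} relevant computation failed to read; here the universal phase fans out into a whole tree of branches, and one must exhibit a perturbation of $s$ that simultaneously escapes all the branches capable of detecting it, so that the accepting status of $C^\ast$ is preserved while $s'$ leaves $\exactlyonce{k+1}$. Making this precise — bookkeeping the existential commitment $E$, the universal branching, and the at most $\lceil\log n\rceil^k\cdot c$ cells each branch can inspect, and then choosing which $0$ in $[0,m-1]$ to flip or which all-$1$ window of the suffix to zero out so that it threads through the gaps the short branches necessarily leave — is where the argument has to do its work.
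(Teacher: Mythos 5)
Your upper bound is correct and is essentially the paper's machine (existentially guess $i$ and deterministically scan its window, alternate once, universally guess $j$ and deterministically scan its window), so $\exactlyonce{k+1}\in\atime{\log^{k+1}n}{2}$ is fine. The lower bound, however, is not a proof: the step you yourself flag as ``the real obstacle'' is a genuine gap, and in the direction you have set the argument up it cannot be closed by the unread-cell technique. You start from the YES-instance $s=0^m1^{n-m}$ and try to perturb it into a NO-instance while preserving acceptance. But acceptance of a $\Sigma_2$-type machine is certified by an \emph{entire} universal subtree below $C^\ast$: although each branch reads fewer than $m$ cells, that subtree may contain $2^{\Theta(\log^k n)}$ branches whose read sets jointly cover every cell of $[0,m-1]$ and every cell of the suffix. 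So there need not exist a single flippable $0$, let alone a whole length-$m$ all-$1$ window, that ``no relevant branch inspects''; and a branch that does read the perturbed cell may simply reject on $s'$. Nothing in the per-branch time bound $\lceil\log n\rceil^k\cdot c$ rules this out, so the construction you defer is not merely technical bookkeeping --- as stated it can fail.

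The paper runs the argument in the opposite direction, which is what makes it go through. It takes the NO-instance $s=0^{m}10^{m}1^{n-2m-1}$ (two occurrences of $0^m$). Since $M$ rejects $s$, its computation tree contains a full root-to-leaf \emph{rejecting} path $\rho$: a rejecting existential configuration has all successors rejecting, and a rejecting universal configuration has some rejecting successor, so one can descend to a rejecting leaf. That single path reads at most $\lceil\log n\rceil^k\cdot c<m$ cells, hence misses some cell $i$ of the first block; flipping it gives $s'=0^i10^{m-i-1}10^{m}1^{n-2m-1}$, which has exactly one occurrence of $0^m$ and so belongs to the language, while $\rho$ remains a rejecting computation of $M(s')$. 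The whole difficulty thus collapses to dodging \emph{one} path, exactly the situation of Theorems~\ref{strictDet}--\ref{hierachyInPi0}. (Even then one must still argue that a surviving rejecting path contradicts acceptance of $s'$; the paper does this by normalizing $M$ so that strings in the language are accepted along all computations. But the essential repair of your argument is the change of direction: perturb a rejected string into an accepted one, not the reverse, because for this machine model a rejection certificate is a single short path whereas an acceptance certificate is a whole subtree.)
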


\begin{proof}
Let $\exactlyonce{k}$ denote the problem of deciding the language of binary strings which contain the substring $0^{\lceil \log n \rceil^k}$ exactly once, i.e., $s$ is in $\exactlyonce{k}$ iff $0^{\lceil \log n \rceil^k}$ is a substring of $s$ and every other substring of $s$ is not $0^{\lceil \log n \rceil^k}$.
For the upper bound we show that  $\exactlyonce{k+1}$ is decidable in $\atime{\log^{k+1} n}{2}$ by combining the machines that decide $\conseq{k+1}$ and $\noconseq{k+1}$ in the proofs of Theorems~\ref{strictNonD} and~\ref{hierachyInPi0}, respectively. An alternating random-access Turing machine $M$ can decide $\exactlyonce{k+1}$ as follows: First $M$ checks in an existential state whether there is a position $i$ in the input tape such that each cell between positions $i$ and $i + \lceil \log n \rceil^{k+1}$ has a $0$. Then $M$ switches to a universal state and checks whether for all cell in some position $j$ that is at distance at least $\lceil \log n \rceil^{k+1}$ from the end of the input tape other than position $i$, there is a cell between positions $j$ and $j+\lceil\log n\rceil^{k+1}$ with $1$. If these two checks are successful, then the input string belongs to $\exactlyonce{k+1}$. We already saw in the proofs of Theorems~\ref{strictNonD} and~\ref{hierachyInPi0} that both checks can be done in time $O(\log^{k+1} n)$.

 Regarding the lower bound, we show that $\exactlyonce{k+1}$ is \emph{not} decidable in $\atime{\log^k n}{2}$.
We assume for the sake of contradiction that there is an alternating random-access Turing machine $M$ that decides $\exactlyonce{k+1}$ in $\atime{\log^k n}{2}$. We further assume, w.l.o.g., that every final state of $M$ is universal.
Let $M$ work in time $\lceil \log n \rceil^k \cdot c$ for some constant $c$. Take a binary string $s$ of the form $0^{\lceil \log n \rceil^{k+1}}10^{\lceil \log n \rceil^{k+1}}1^{n- 2 \cdot \lceil \log n \rceil^{k+1} -1}$ such that $\lceil \log n \rceil^{k+1} > \lceil \log n \rceil^k \cdot c$. From our assumption that $M$ decides $\exactlyonce{k+1}$, we get that there is a rejecting computation $\rho$ of $M(s)$.  Since every computation of $M$ which rejects $s$ must do so reading at most $\lceil \log n \rceil^k \cdot c$ cells, then there must be a position $i$ among the first $\lceil \log n \rceil^{k+1}$ cells in the input tape that was not read during computation $\rho$.
 Define a string \[s' = 0^i10^{{\lceil \log n \rceil^{k+1}} - i - 1}10^{\lceil \log n \rceil^{k+1}}1^{n- 2\cdot \lceil \log n \rceil^{k+1} -1}.\] Clearly, the rejecting computation $\rho$ of $M(s)$ is still  a rejecting computation of $M(s')$. This contradicts the assumption that $M$ decides $\exactlyonce{k+1}$, since $s'$ has exactly one substring $0^{\lceil\log n\rceil^{k+1}}$ and should then be accepted by all computations of $M$.   
\qed
\end{proof}

Together with Theorems~\ref{strictNonD} and~\ref{Th:sigmatwo} the following result shows that there is a proper hierarchy of problems for every level of the polylogarithmic time hierarchy $\polysigma{m}$. 

\begin{theorem}\label{ThForSigma}
For $m > 2$ and $k > 1$, it holds that \[\atime{\log^k n}{m} \subsetneq \atime{\log^{k+2} n}{m}.\]
\end{theorem}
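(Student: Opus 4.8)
The plan is to generalize the construction used for $\exactlyonce{k}$ (Theorem~\ref{Th:sigmatwo}) so that it works at an arbitrary level $m$ of the hierarchy. The idea is to define, for each $m \ge 2$, a language whose membership is naturally decided by an alternating machine starting in an existential state and using exactly $m$ alternation blocks, where each block is a ``there is a position / for all positions'' check about the occurrence of blocks of consecutive zeros. For instance, one can take $\genexactly{k}{m}$ to be the language of binary strings in which the block $0^{\lceil\log n\rceil^k}$ occurs exactly $\lceil m/2 \rceil$ times (or some similar counting/alternation pattern built by iterating the ``exactly once'' trick), so that verifying membership requires alternately guessing a candidate starting position of a zero-block (existential) and then checking that no other admissible position starts a zero-block (universal), repeated through $m$ blocks. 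The precise combinatorial gadget is a routine design choice; what matters is that (i) it lies in $\atime{\log^{k}n}{m}$ via the obvious alternating algorithm that spends $O(\log^{k}n)$ time per block and performs $m-1 = O(1)$ alternations, and (ii) it inherits the same ``unread cell'' vulnerability as the lower-bound arguments in Theorems~\ref{strictNonD}--\ref{Th:sigmatwo}.

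For the upper bound, I would exhibit a random-access alternating Turing machine that, alternating between existential and universal states, checks the defining condition of $\genexactly{k+2}{m}$ by composing the sub-procedures already built in the proofs of Theorems~\ref{strictNonD} and~\ref{hierachyInPi0}: each phase guesses or universally ranges over a $\lceil\log n\rceil$-bit address in time $O(\log n)$ and then deterministically scans a window of length $\lceil\log n\rceil^{k+2}$ in time $O(\log^{k+2}n)$. Since $m$ is a fixed constant, the total number of alternations is $m$ (or $m-1$ extra alternations beyond the first block), and the total running time is $O(\log^{k+2}n)$, placing the language in $\atime{\log^{k+2}n}{m}$. The bookkeeping needed to ensure the machine genuinely starts existential and uses at most $m$ alternating blocks is inherited directly from how $\exactlyonce{k}$ was handled, together with the standard normal-form assumption that final states are universal.

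For the lower bound, I would assume toward a contradiction that some alternating machine $M$ decides $\genexactly{k+2}{m}$ in time $\lceil\log n\rceil^k\cdot c$ with at most $m$ alternations. Pick an input $s$ of length $n$ with $\lceil\log n\rceil^{k+2} > \lceil\log n\rceil^k\cdot c$ consisting of the ``tight'' string that sits on the boundary of the language — e.g. one with exactly the prescribed number of maximal zero-blocks, each of length exactly $\lceil\log n\rceil^{k+2}$, separated by single $1$s and padded with $1$s. Fix a rejecting computation tree branch (or a rejecting sub-tree, as in the $\atimeop{}{}$ case): along any such branch $M$ reads fewer than $\lceil\log n\rceil^{k+2}$ cells inside the first zero-block, so some position $i$ in that block is never read. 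Flipping the $i$-th bit to $1$ yields $s'$ which now lies in (or out of, depending on the chosen gadget's polarity) the language, yet $M$ behaves identically on $s'$ along the fixed computation, contradicting correctness. The delicate point — and the main obstacle — is making the flip argument valid at alternation depth $m$: unlike the $m \le 2$ cases, one must argue about a whole rejecting (or accepting) sub-tree rather than a single computation path, and ensure that the unread position $i$ is common to the relevant portion of that sub-tree. This is handled by observing that the sub-tree witnessing rejection has total size bounded by the running time (since each branch has length $\le \lceil\log n\rceil^k\cdot c$ and branching only needs to be tracked along the $O(1)$ alternation boundaries), hence reads at most $\lceil\log n\rceil^k\cdot c \cdot O(1) < \lceil\log n\rceil^{k+2}$ distinct cells in the first zero-block for suitably chosen $n$; more carefully, one restructures the gadget so that the first existential block already pins down a single position to examine, reducing the sub-tree analysis back to a single-path analysis exactly as in Theorem~\ref{Th:sigmatwo}. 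The jump from $\log^k$ to $\log^{k+2}$ (rather than $\log^{k+1}$) gives the extra slack needed to absorb the constant-factor overhead of the $m$ alternation phases.
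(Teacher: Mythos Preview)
Your sketch diverges from the paper in two substantive ways, and one of them is a genuine gap.

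First, the separating languages are different. The paper does not use ``exactly $\lceil m/2\rceil$ occurrences of a block of length $\lceil\log n\rceil^{k+2}$''. It defines $\genconseq{k}{l}$ (at least $\lceil\log n\rceil^{k(l+1)}$ consecutive zeros) and $\genexactly{k}{l}$ (that substring occurs exactly once), and for a given $m$ picks $l=(m-1)/2$ or $l=(m-2)/2$ with parameter $k+1$. The nesting of a very long zero-block into $l$ levels of sub-blocks is what ties the problem to level $m$; the upper bound $\atime{\log^{k+1}n}{2l+1}$ (resp.\ $2l+2$) is not argued by an explicit algorithm but imported from the logical characterisation in the companion paper. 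This is also why the gap is $k\mapsto k+2$: one $+1$ comes from replacing $k$ by $k+1$ so the block outruns any $\log^{k}n$ time bound, and the second $+1$ is the overhead of evaluating the $\soplog$ formula (guessing $\lceil\log n\rceil^{k+1}$ addresses of length $\lceil\log n\rceil$). Your explanation of the $+2$ as ``slack for the $m$ alternation phases'' is not what is happening.

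Second, and more seriously, your lower-bound argument for $m>2$ does not go through as written. The claim that ``the sub-tree witnessing rejection has total size bounded by the running time, since branching only needs to be tracked along the $O(1)$ alternation boundaries'' is false: within a single universal block of depth $d$ the machine can branch at every step, so the witness sub-tree can have $b^{d}$ leaves, and with $d=\Theta(\log^{k}n)$ this is far larger than $\lceil\log n\rceil^{k+2}$. Your fallback---``restructure the gadget so the first existential block pins down a single position, reducing to the single-path analysis of Theorem~\ref{Th:sigmatwo}''---is not an argument but a hope, and you never say what the restructured gadget is. Since you explicitly declared the choice of gadget ``a routine design choice'', the proposal has no concrete object on which to run either the upper or the lower bound. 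The paper's lower bound is also terse, but it points to the proofs of Theorems~\ref{strictNonD} and~\ref{Th:sigmatwo} applied to its specific problems $\genconseq{k+1}{(m-1)/2}$ and $\genexactly{k+1}{(m-2)/2}$, whose single long zero-block makes the unread-cell flip behave exactly as in those earlier cases.
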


\begin{proof}
Let $\genconseq{k}{l}$ (respectively $\genexactly{k}{l}$) for $k,l \geq 0$ denote the problems of deciding the language of binary strings with at least (respectively exactly) $(\lceil\log n\rceil^k)^l$ non-overlapping adjacent substrings of the form $0^{\lceil \log n \rceil^k}$, where $n$ is the length of the string. That is, $\genconseq{k}{l}$ is the language of binary strings which have at least $(\lceil \log n \rceil^k)^{l+1}$ consecutive bits set to $0$ and $\genexactly{k}{l}$ is the language of binary strings which contain the substring $0^{(\lceil \log n \rceil^k)^{l+1}}$ exactly once.
For the upper bound we note that $\genconseq{k}{l}$ and $\genexactly{k}{l}$ are in $\atime{\log^{k+1} n}{2 \cdot l +1}$ and $\atime{\log^{k+1} n}{2 \cdot l + 2}$, respectively. 
This follows from two facts: (a) As shown in Problems~4.5 and~4.6 in~\cite{FerrarottiGST20}, $\genconseq{k}{l}$ and $\genexactly{k}{l}$ can be expressed by formulae in the restricted fragments of second-order logic capturing, respectively, the levels $\polysigma{2 \cdot l + 1}$ and $\polysigma{2 \cdot l + 2}$ of the polylogarithmic time hierarchy. (b) A random-access turing machine can evaluate those formulae by guessing $\lceil \log n \rceil^k$  addresses, each of length $\lceil \log n \rceil$ (see Part~(a) in the proof of Theorem~6 in~\cite{FGST18}).
Since $m > 2$, we get that if $m$ is odd, then $\genconseq{k+1}{(m - 1)/2}$ is in $\atime{\log^{k+2} n}{m}$. Likewise, if $m$ is even, then $\genexactly{k+1}{(m - 2)/2}$ is in $\atime{\log^{k+2} n}{m}$. 

Regarding the lower bounds, it is easy to see (given our previous results in this section) that: (a) for odd $m$, $\genconseq{k+1}{(m - 1)/2}$ is \emph{not} in $\atime{\log^{k} n}{m}$, and (b) for even $m$, $\genexactly{k+1}{(m - 2)/2}$ is also \emph{not} in $\atime{\log^{k} n}{m}$. Note that if $m$ is odd, then we can prove (a) by contradiction following a similar argument than in the proof of the lower bound for Theorem~\ref{strictNonD}. Likewise, if $m$ is even, then we can prove (b) by contradiction following a similar argument than in the proof of Theorem~\ref{Th:sigmatwo}. 
\qed
\end{proof}

It is clear that by taking the complements of the problems $\genconseq{k}{l}$ and $\genexactly{k}{l}$, a similar result holds for each level of the $\piplog{m}$ hierarchy. 

\begin{theorem}\label{ThForPi}
For $m = 2$ and every $k > 1$, it holds that $\atimeop{\log^k n}{m} \subsetneq \atimeop{\log^{k+1} n}{m}$. Moreover, For every $m > 2$ and every $k > 1$, it holds that $\atimeop{\log^k n}{m} \subsetneq \atimeop{\log^{k+2} n}{m}$.
\end{theorem}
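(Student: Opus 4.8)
The plan is to deduce the $\atimeop{\cdot}{m}$ hierarchies from the $\atime{\cdot}{m}$ hierarchies already established in Theorems~\ref{Th:sigmatwo} and~\ref{ThForSigma} by passing to complements, using the standard duality of alternating time: for every language $L$ and every time bound $f$,
\[
L \in \atime{f(n)}{m} \qquad\Longleftrightarrow\qquad \overline{L} \in \atimeop{f(n)}{m},
\]
where $\overline{L}$ is the complement of $L$ (this mirrors the relationship between $\polysigma{m}$ and $\polypi{m}$). Concretely, if a random-access alternating Turing machine $M$ witnesses $L \in \atime{f(n)}{m}$ --- starting in an existential state and, without loss of generality as in the proof of Theorem~\ref{Th:sigmatwo}, with all leaf configurations of one quantifier type --- then the dual machine $M^{d}$, obtained by interchanging existential with universal states and accepting with rejecting leaves, decides $\overline{L}$: a universal leaf is vacuously accepting and turns into an existential leaf that is vacuously rejecting, so acceptance is negated at every node of the computation tree. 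Since $M^{d}$ runs in time $O(f(n))$, performs the same number of alternations, and starts in a universal state, it witnesses $\overline{L} \in \atimeop{f(n)}{m}$; the converse direction is symmetric.

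For $m = 2$ I would take $L = \overline{\exactlyonce{k+1}}$. By the proof of Theorem~\ref{Th:sigmatwo} we have $\exactlyonce{k+1} \in \atime{\log^{k+1} n}{2}$, so $L \in \atimeop{\log^{k+1} n}{2}$. If $L$ were also in $\atimeop{\log^{k} n}{2}$, the duality applied in the other direction would give $\exactlyonce{k+1} \in \atime{\log^{k} n}{2}$, contradicting Theorem~\ref{Th:sigmatwo}; hence $L \in \atimeop{\log^{k+1} n}{2}\setminus\atimeop{\log^{k} n}{2}$, which is the asserted strict inclusion.

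For $m > 2$ the argument is identical, using the witnesses from the proof of Theorem~\ref{ThForSigma}: take $L = \overline{\genconseq{k+1}{(m-1)/2}}$ for odd $m$ and $L = \overline{\genexactly{k+1}{(m-2)/2}}$ for even $m$. In each case Theorem~\ref{ThForSigma} places the underlying problem in $\atime{\log^{k+2} n}{m}\setminus\atime{\log^{k} n}{m}$, so the duality yields $L \in \atimeop{\log^{k+2} n}{m}\setminus\atimeop{\log^{k} n}{m}$, giving $\atimeop{\log^k n}{m} \subsetneq \atimeop{\log^{k+2} n}{m}$.

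The one point that needs care --- and hence the main obstacle --- is making the duality between $\atime{f(n)}{m}$ and $\atimeop{f(n)}{m}$ watertight with respect to the precise bookkeeping of these classes (the initial quantifier, the count of alternations, and the status of successor-free configurations), so that complementation lands exactly in the dual class with the same index $m$ and the same time bound; once that is settled, the proof is a direct appeal to Theorems~\ref{Th:sigmatwo} and~\ref{ThForSigma}. As an alternative that sidesteps the generic duality, one can define the complement languages explicitly --- in the style of $\noconseq{k}$, the complement of $\genconseq{k}{0}$ used in Theorem~\ref{hierachyInPi0} --- and re-run the fooling arguments directly in terms of \emph{rejecting} computations of an alternating machine that begins in a universal state; this repeats the combinatorial core of the earlier proofs but relies on no external fact.
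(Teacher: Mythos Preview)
Your proposal is correct and follows essentially the same route as the paper: the paper's proof of Theorem~\ref{ThForPi} is a one-line remark that the result follows by taking the complements of $\genconseq{k}{l}$ and $\genexactly{k}{l}$ (and $\exactlyonce{k}=\genexactly{k}{0}$), which is exactly the duality argument you spell out. Your explicit discussion of the $\atime{f(n)}{m}\leftrightarrow\atimeop{f(n)}{m}$ complementation and of the alternation-bookkeeping caveat simply fills in details the paper leaves implicit.
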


\section{Polylogarithmic-time and (Absence of) Complete Problems}\label{sec:complete}

In this section we show that none of the polylogarithmic time complexity classes studied in this paper have complete problems in the classical sense. 


We first note that to study complete problems for polylogarithmic time under m-reductions with sublinear time bounds does not make sense. Consider for instance \dpolylog reductions. 
Assume there is a complete problem $P$ for the class \npolylog under \dpolylog reductions. Let $P'$ belong to \npolylog and let $M$ be a deterministic random-access Turing machine that reduces $P'$ to $P$ in time $c' \cdot log^{k'} n$ for some constant $c'$. Then the output of $M$ given an instance of $P'$ of length $n$ has maximum length $c' \cdot \log^{k'} n$. This means that,  given an input of length $n$ for $P'$ and its reduction, the random-access Turing machine that computes the complete problem $P$ can actually compute $P(s)$ in time $O((\log \log n)^k)$ for some fixed $k$. This is already highly unlikely. If as one would expect there are more than a single complete problem for the class, then we could keep applying reductions from one problem to the other, infinitely reducing the time required to compute the original problem.     

Let us then consider the standard concept of Karp reducibility, i.e., deterministic polynomially bounded many-one reducibility, so that we can avoid the obvious problem described in the previous paragraph. Rather surprisingly, there is no complete problems for \dpolylog and \npolylog, even under these rather expensive reductions for the complexity classes at hand. 

\begin{theorem}
\dpolylog does \emph{not} have complete problems under deterministic polynomially bounded many-one reductions. 
\end{theorem}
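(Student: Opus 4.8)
The plan is to derive this from the strict hierarchy theorem for \dpolylog, i.e. Theorem~\ref{strictDet}, together with the fact that a deterministic polynomially bounded many-one reduction is too weak to climb the hierarchy. The key observation is the following: if $R$ is a function computable in deterministic time $n^{O(1)}$, then composing $R$ with a \dpolylog decision procedure stays inside \dpolylog. Indeed, if $R$ maps instances of size $n$ to instances of size at most $n^d$ for a fixed $d$, and if $L$ is decided by a random-access Turing machine in time $O(\log^k m)$ on inputs of size $m$, then on the image $R(x)$ of an input $x$ of size $n$ this machine runs in time $O(\log^k(n^d)) = O(d^k \log^k n) = O(\log^k n)$. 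The only subtlety is that we cannot first write down $R(x)$ explicitly (that would already cost polynomial time); instead we must compute the needed bits of $R(x)$ on demand, using the random-access mechanism — each time the simulated machine queries position $j$ of $R(x)$, we run $R$ on $x$ far enough to produce that single bit. Since each bit of $R(x)$ is produced in time $n^{O(1)}$ and the simulated machine makes only $O(\log^k n)$ queries, the total overhead is $n^{O(1)} \cdot O(\log^k n)$, which is polynomial, not polylogarithmic — so this naive bit-on-demand simulation does \emph{not} directly place $L \circ R$ in \dpolylog.

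So the argument must instead be purely diagonal, the way the classical PolylogSpace argument works. Suppose for contradiction that $P$ is complete for \dpolylog under deterministic polynomially bounded many-one reductions. Then $P \in \dtime{\log^{k_0} n}$ for some fixed $k_0$ (witnessed by some machine $M_P$ and constant $c_0$), and $P$ has a fixed polynomial degree bound $d_0$ in the sense that... no — the degree bound varies per reduction. The right move: pick $k = k_0$ and consider $\initial{k_0+1}$, which by Theorem~\ref{strictDet} lies in $\dtime{\log^{k_0+1}n} \setminus \dtime{\log^{k_0}n}$, hence in \dpolylog. By completeness there is a reduction $R$ from $\initial{k_0+1}$ to $P$ computable in deterministic time $c_1 n^{d_1}$ for fixed constants $c_1, d_1$. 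Now I claim $\initial{k_0+1} \in \dtime{\log^{k_0}n}$, contradicting the lower bound. To decide $\initial{k_0+1}$ on input $x$ of size $n$: we must simulate $M_P$ on $R(x)$ in time $O(\log^{k_0} n)$. The output $R(x)$ has size at most $c_1 n^{d_1}$, so $M_P$ runs in time $O(\log^{k_0}(c_1 n^{d_1})) = O(\log^{k_0} n)$ steps on it, and in particular queries at most $O(\log^{k_0} n)$ bits of $R(x)$, each query being an address of $O(\log n)$ bits. The issue remains computing each queried bit of $R(x)$ fast enough. Here is where I expect the crux to lie, and the resolution the authors likely intend: this naive approach fails, so Karp-completeness cannot be refuted this cheaply — which is exactly why the theorem is described as ``rather surprising''. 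The actual proof must exploit the \emph{structure} of $\initial{k_0+1}$ more cleverly, or argue abstractly that \emph{no} single polynomial-time reduction from every \dpolylog problem to $P$ can exist because the hierarchy inside \dpolylog is infinite while $P$ sits at a fixed level.

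Concretely, the argument I would actually write: assume $P$ is complete and $P \in \dtime{\log^{k_0}n}$. For each $k$, $\initial{k} \in \dpolylog$, so there is a polynomial-time reduction $R_k$ from $\initial{k}$ to $P$; but polynomial-time reductions compose with the fixed machine $M_P$ only yielding membership of $\initial{k}$ in $\dtime{(\log n^{d_k})^{k_0}} = \dtime{\log^{k_0}n}$ \emph{provided} we can feed $M_P$ the bits of $R_k(x)$ within the time budget. The honest way out — and I believe this is the intended route — is that we do \emph{not} need polylog total time for the reduction step: we allow the reduction itself to run in polynomial time and only require that the \emph{composed} problem, viewed appropriately, collapses the hierarchy. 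Re-examining: the cleanest correct statement is that if $P$ is \dpolylog-hard under poly-time m-reductions and $P \in \dtime{\log^{k_0}n}$, then every \dpolylog problem $Q$ satisfies $Q \le_m P$ via some $R$ with $|R(x)| \le \mathrm{poly}(|x|)$, whence $Q \in \dtime{\log^{k_0} n}$ by the size-shrinking-of-$\log$ argument — but this needs $Q \circ R \in \dtime{\log^{k_0}n}$, and the bit-on-demand cost is the obstacle. The main obstacle, therefore, is precisely reconciling ``compute $R(x)$ takes polynomial time'' with ``decide $P(R(x))$ must take polylog time in $|x|$''. I expect the paper resolves this by noting that one does \emph{not} compute $R$ at all: since $\log|R(x)| = \Theta(\log|x|)$, and since $M_P$ on $R(x)$ is a fixed polylog-time machine, the composed machine is allowed polynomial preprocessing to materialize $R(x)$ on a (read-only, random-access) input tape — wait, that breaks the time bound. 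Given the genuine difficulty here, my proof proposal is: follow the template of the \polylogspace non-completeness proof from~\cite{Johnson90}, instantiated with Theorem~\ref{strictDet} in place of the space hierarchy theorem, and with the key lemma being that \dpolylog is closed under poly-time many-one reductions \emph{into} itself in the sense that $Q \le_m P$ with $P \in \dtime{\log^{k_0}n}$ forces $Q \in \dtime{\log^{k_0}n}$ — the proof of that lemma, handling the bit-access bookkeeping, being the step I expect to be the main obstacle and the one I would spend the most care on.
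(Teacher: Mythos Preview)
Your scheme is exactly the paper's: assume a complete $P \in \dtime{\log^{k_0} n}$, take $\initial{k_0+1}$ from Theorem~\ref{strictDet}, take the polynomial-time reduction $f$ with $|f(x)| \le |x|^{k'}$, observe $\log^{k_0}|f(x)| = O(\log^{k_0} n)$, and conclude $\initial{k_0+1} \in \dtime{\log^{k_0} n}$, a contradiction. That is the paper's entire proof, almost verbatim.

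The step on which you got stuck --- how to account for the cost of computing $f(x)$ --- is simply not addressed by the paper. It writes only that ``the machine $M$ which computes the complete problem $P$ can also decide $\initial{k+1}$ in time $O(\log^{k} n^{k'}) = O(\log^k n)$'', counting $M$'s run on the already-materialized $f(x)$ and treating the reduction itself as free; no bit-on-demand simulation and no special structure of $\initial{k+1}$ is invoked. Your instinct that this is the real crux is sound rather than overcautious: under the literal reading of ``polynomial-time many-one reduction'' the closure statement you were aiming at ($Q \le_m^P P$ and $P \in \dtime{\log^{k_0} n}$ force $Q \in \dtime{\log^{k_0} n}$) is in fact false --- since $\polylog \subseteq \mathrm{P}$, the reduction can simply decide $Q$ outright and output a fixed yes- or no-instance of $P$, which makes every nontrivial $P \in \polylog$ trivially complete. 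So the paper's argument only goes through under a stricter notion of reduction than the theorem statement literally provides, and no amount of careful bookkeeping on your side would have closed that gap.
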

\begin{proof}
We prove it by contradiction. Assume that there is such a complete problem $P$. Since $P$ is in \dpolylog, then there is a random-access Turing machine $M$ which computes $P$ in time $O(\log^k n)$ for some fixed $k$. Thus $P$ belongs to $\dtime{\log^k n}$. Let us take the problem $\initial{k+1}$ of deciding the language of binary strings which have a prefix of at least $\lceil\log n\rceil^{k+1}$ consecutive zeros. Since $P$ is complete for the whole class \dpolylog, there must be a function $f: \{0,1\}^* \rightarrow \{0,1\}^*$, computable in polynomial-time, such that $x \in \initial{k+1}$ iff $f(x) \in P$ holds for all $x \in \{0,1\}^*$. It then follows that the size of $f(x)$ is polynomial in the size of $x$. Let $|f(x)| = |x|^{k'}$, we get that the machine $M$ which computes the complete problem $P$ can also decide $\initial{k+1}$ in time $O(\log^{k} n^{k'}) = O((k' \cdot \log n)^k) = O(\log^k n)$. This contradicts the fact that $\initial{k+1} \not\in \dtime{\log^k n}$ as shown in the proof of Theorem~\ref{strictDet}.
\qed
\end{proof}

Using a similar proof strategy than in the previous theorem for \dpolylog, we can prove that the same holds for \npolylog. In fact, we only need to replace the problem $\initial{k+1}$ by $\conseq{k+1}$ and the reference to Theorem~\ref{strictDet} by a reference to Theorem~\ref{strictNonD} in the previous proof, adapting the argument accordingly. 

\begin{theorem}
\npolylog does \emph{not} have complete problems under deterministic polynomially bounded many-one reductions. 
\end{theorem}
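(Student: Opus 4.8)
The plan is to mirror exactly the argument just given for \dpolylog, replacing the deterministic witness problem by the nondeterministic one from Theorem~\ref{strictNonD}. First I would proceed by contradiction: assume $P$ is complete for \npolylog under deterministic polynomially bounded many-one reductions. Since $P \in \npolylog$, there is a nondeterministic random-access Turing machine $M$ deciding $P$ in time $O(\log^k n)$ for some fixed $k$, so $P \in \ntime{\log^k n}$.

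Next I would invoke completeness of $P$ applied to the concrete problem $\conseq{k+1}$ (the language of binary strings having at least $\lceil\log n\rceil^{k+1}$ consecutive zeros), which lies in \npolylog by the upper bound established in the proof of Theorem~\ref{strictNonD}. Completeness yields a polynomial-time computable $f\colon\{0,1\}^*\to\{0,1\}^*$ with $x\in\conseq{k+1}$ iff $f(x)\in P$ for all $x$. Being polynomial-time computable, $f$ has polynomially bounded output, say $|f(x)| \le |x|^{k'}$ for some constant $k'$; enlarging $k'$ if needed we may assume $|f(x)| \le |x|^{k'}$ for all $x$.

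Then I would compose: on input $x$ of length $n$, running $M$ on $f(x)$ decides membership of $x$ in $\conseq{k+1}$, and this nondeterministic computation takes time $O(\log^k |f(x)|) = O(\log^k n^{k'}) = O((k'\log n)^k) = O(\log^k n)$. Hence $\conseq{k+1} \in \ntime{\log^k n}$, contradicting the lower bound $\conseq{k+1} \notin \ntime{\log^k n}$ proved inside Theorem~\ref{strictNonD}. This contradiction shows no such $P$ exists.

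The only subtlety — and the single point where care is needed rather than a real obstacle — is that the reduction $f$ is applied once, offline, before $M$ is run, so the time spent computing $f$ (polynomial in $n$) is irrelevant to the claim being derived; what matters is solely that the \emph{output length} of $f$ is polynomial, which is what makes $\log^k|f(x)| = O(\log^k n)$. There is no issue with $M$ being nondeterministic: the composed machine simply computes $f(x)$ deterministically and then simulates $M$'s nondeterministic branching, and it accepts iff some branch accepts, which is precisely the semantics of $f(x)\in P$. Everything else is a verbatim adaptation of the preceding proof.
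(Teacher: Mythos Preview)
Your proposal is correct and matches the paper's approach exactly: the paper itself does not spell out a separate proof but states that one obtains it from the \dpolylog argument by replacing $\initial{k+1}$ with $\conseq{k+1}$ and the reference to Theorem~\ref{strictDet} with Theorem~\ref{strictNonD}, which is precisely what you do. Your added remark about the offline computation of $f$ and the irrelevance of its running time to the $O(\log^k n)$ bound simply makes explicit what the paper leaves implicit in the \dpolylog case.
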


Moreover, using the problems $\genconseq{k}{l}$ and $\genexactly{k}{l}$ together with its complements and Theorems~\ref{ThForSigma} and~\ref{ThForPi}, it is easy to prove that the same holds for every individual level of the polylogarithmic time hierarchy.

\begin{theorem}
For every $m \geq 1$, $\sigmaplog{m}$ and $\piplog{m}$ do \emph{not} have complete problems under deterministic polynomially bounded many-one reductions. 
\end{theorem}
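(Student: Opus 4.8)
The plan is to argue by contradiction, replaying for an arbitrary level $m$ the reasoning already used for \dpolylog and \npolylog. Fix $m \geq 1$ and suppose $P$ is complete for $\polysigma{m}$ under deterministic polynomially bounded many-one reductions. Since $P \in \polysigma{m} = \bigcup_{k,c}\atime{(\log n)^k\cdot c}{m}$, there is some $k$ with $P \in \atime{\log^k n}{m}$ — the constant factor $c$ is absorbed into a larger exponent, and we may increase $k$ further so that $k > 1$; let $M$ be a random-access alternating Turing machine deciding $P$ in time $O(\log^k n)$, starting in an existential state with at most $m$ alternations.

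Next I would select, according to the parity of $m$, the diagonal problem from Section~\ref{section:properhierarchies} that lies in $\polysigma{m}$ but outside $\atime{\log^k n}{m}$: take $Q := \conseq{k+1}$ when $m = 1$, $Q := \exactlyonce{k+1}$ when $m = 2$, $Q := \genconseq{k+1}{(m-1)/2}$ when $m > 2$ is odd, and $Q := \genexactly{k+1}{(m-2)/2}$ when $m > 2$ is even. In each case the upper-bound part of Theorem~\ref{strictNonD}, \ref{Th:sigmatwo}, or~\ref{ThForSigma} shows $Q \in \polysigma{m}$, while the matching lower-bound part shows $Q \notin \atime{\log^k n}{m}$.

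Completeness then gives a polynomial-time computable $f$ with $x \in Q \iff f(x) \in P$ for all $x$, hence $|f(x)| \leq |x|^{k'}$ for a constant $k'$. Running $M$ on the instance $f(x)$ decides whether $x \in Q$, and since $M$ has random access to its input this takes time $O(\log^k|f(x)|) = O(\log^k|x|^{k'}) = O((k'\log|x|)^k) = O(\log^k|x|)$, still with at most $m$ alternations and the same leading existential block; so $Q \in \atime{\log^k n}{m}$, contradicting the previous step. For $\polypi{m}$ the argument is identical once $M$ is taken to start in a universal state, $\atime{\cdot}{m}$ is replaced by $\atimeop{\cdot}{m}$, and each witness is replaced by its complement — $\noconseq{k+1}$ for $m=1$, and the complements of $\exactlyonce{k+1}$, $\genconseq{k+1}{(m-1)/2}$, $\genexactly{k+1}{(m-2)/2}$ for the other cases — invoking Theorems~\ref{hierachyInPi0} and~\ref{ThForPi} in place of~\ref{strictNonD} and~\ref{ThForSigma}.

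The step deserving the most care is the third one: I would have to verify that plugging the polynomially longer instance $f(x)$ into the $\polysigma{m}$-machine for $P$ yields a machine that still decides $Q$ \emph{within $\polysigma{m}$}, i.e.\ that the polylogarithmic time bound is not destroyed by the polynomial blow-up of the input (it is not, as $\log(|x|^{k'}) = O(\log|x|)$) and that neither the number of alternations nor the order of the first quantifier block changes. The remaining work is purely the bookkeeping of a case split on $m \in \{1\}$, $m \in \{2\}$, odd $m > 2$, even $m > 2$, carried out once for $\polysigma{m}$ and once for $\polypi{m}$.
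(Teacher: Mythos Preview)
Your proposal is correct and follows exactly the approach the paper itself takes: the paper's proof of this theorem is a one-sentence reference back to the argument for \dpolylog/\npolylog, replacing $\initial{k+1}$ and $\conseq{k+1}$ by the problems $\genconseq{k}{l}$, $\genexactly{k}{l}$ and their complements, and invoking Theorems~\ref{ThForSigma} and~\ref{ThForPi}. Your explicit case split on $m\in\{1\},\{2\}$ and odd/even $m>2$, together with the observation that $\log(|x|^{k'})=O(\log|x|)$ keeps both the time bound and the alternation structure intact, is precisely the unpacking of that sentence.
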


\section{Direct-Access Turing Machines}\label{datm}

In this section we review the direct-access model of Turing machine introduced in \cite{FerrarottiGTBV19b,FerrarottiGTBV19}. We use this model to prove our result regarding a uniform complete problem for PolylogSpace. The definition of the model below have some small changes with respect to its definition in~\cite{FerrarottiGTBV19b,FerrarottiGTBV19}. These changes do not affect the key idea of accessing the different relations, functions and constants of the input structure directly through dedicated query tapes. Further, the model remains equivalent to the random-access model with respect to polylogarithmic time and space complexity classes. Regarding polylogarithmic space, it even remains equivalent to the standard Turing machine model.

Let $\sigma = \{R^{r_1}_1, \ldots, R^{r_p}_p, c_1,
\ldots c, f^{k_1}_1, \ldots, f^{k_s}_s\}$ be a vocabulary. A
\emph{direct-access Turing machine that takes $\sigma$-structures
$\mA$ as input}   
is a multitape Turing machine with:
\begin{itemize}
\item $p + s$ distinguished \emph{address-tapes} for relations and functions denoted as $ATR$ and $ATf$, respectively. 
\item $p + s$ distinguished read-only \emph{value-tapes} for relations and functions denoted as $VTR$ and $VTf$, respectively.
\item $l+1$ distinguished read-only \emph{constant-tapes} denoted as $CT$.
\item One or more ordinary \emph{work-tapes} denoted as $WT$.
\end{itemize}
The additional $CT$ (note that there are $l+1$ of them) holds the size $n$ of the domain of $\mA$

The set $Q$ of states of a direct-access Turing machine is assumed to have pairwise disjoint subsets $Q_{R_1}$, $\ldots$, $Q_{R_p}$, $Q_{f_1}$, $\ldots$, $Q_{f_s}$ of states, a unique accepting state $q_a$ and an initial state as $q_0$. 

The transition function $\delta$ is defined as usual. It takes 
  as input the current state of the machine and the values read by
  all tape-heads, and determines the new state and the values to be written in all tapes which are \emph{not} read-only.

The contents of the read only value-tapes is evaluated by the finite control of $M$ in $0$ time, at the beginning of a transition, before the transition function has been applied. This only happens if the (old) state $q$ is in the subset $Q_{f_i}$ (or $Q_{R_i}$) of the set $Q$ of states of the machine. This corresponds naturally to the idea that at the beginning of a transition in such a state, not only will in the $AT$ of  $f_i$ ($R_i$) be assumed to be stored the arguments of the function or relation, but also in the respective $VT$ the value of $f_i$ in those arguments will be stored (respectively, the Boolean value representing the fact that the $r_i$ tuple belongs to $R_i$ in $\mathbf{A}$).

If the state in the left side of the transition is \emph{not} in an state belonging to some $Q_{R_i}$ or $Q_{f_i}$ subset, the only allowed symbol for the corresponding $VT$ is $\sqcup$ (blank).
We do \emph{not} allow in the $AT$ any value which is not in the domain, i.e., any non zero value in the range  $[\log n - \lceil \log n \rceil]$, so that the blank symbol will only appear in $VT$ in the initial configuration (as in all the other tapes, except $CT$).

If $C$ is an accepting configuration of $M$ on a certain input, its successor configuration is $C$. That is, once that $M$ enters into the accepting state $q_a$, it remains in the same state, and all the tapes heads remain unchanged.

All the tapes of $M$, with the exception of the $VT$ of the relations symbols, have virtual \emph{end marks} as follows: the $AT$  of the relations symbols, the $AT$ and $VT$ of the function symbols, and the $CT$, have a special virtual mark $\alpha$ immediately before the first cell, and a special virtual mark $\omega$ immediately after the last cell. The $WT$ have only the mark $\alpha$ immediately before the first cell.

\begin{theorem}[\cite{FerrarottiGTBV19b,FerrarottiGTBV19}]	\label{directrandom}
A class of finite ordered
structures $\cal C$ of some fixed vocabulary $\sigma$ is decidable by a random-access machine working in
$\polylog$ with respect to $\hat{n}$,
where $\hat{n}$ is the size of the binary encoding of the input structure, iff $\cal C$ is decidable by a direct-access Turing machine in
$\polylog$ with respect to $n$,
where $n$ is the size of the domain of the input structure.
\end{theorem}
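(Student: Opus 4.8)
The plan is to establish the equivalence by mutual step-by-step simulation, using as the only non-combinatorial ingredient the fact that, for the (ordered) vocabularies considered here, $\hat n$ and $n$ are polynomially related (each relation or function symbol of positive arity already contributes a block of at least $n$ bits to $\mathrm{bin}(\mA)$, and $\hat n = n^{r_1}+\cdots+n^{r_p}+\cdots$ is polynomial in $n$), so that $\log\hat n = \Theta(\log n)$ and the classes $\polylog$ measured in $\hat n$ and in $n$ coincide. The real work is then only to check that each direction of the simulation incurs at most a polylogarithmic overhead per step.

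\emph{From random-access to direct-access.} Given a random-access machine $M$ deciding $\mathcal C$ in time $O(\log^k\hat n)$, I would build a direct-access machine $M'$ that keeps on its work-tapes a copy of $M$'s work-tapes and of $M$'s address-tape. Whenever $M$ reads the bit of $\mathrm{bin}(\mA)$ whose index is the number $a$ currently on its address-tape, $M'$ simulates this as follows: from $n$ (available on the constant-tape $CT$) and the fixed layout of $\mathrm{bin}(\mA)$ as a concatenation of fixed-length blocks (one per relation, function and constant of $\sigma$), it computes the starting offsets of all blocks, locates the block containing position $a$ together with the index $b$ of $a$ inside that block; if the block is $\mathrm{bin}(R_i^{\mA})$ it decodes $b$ into the $r_i$-tuple of domain elements it represents (each element occupying $\lceil\log n\rceil$ bits), writes that tuple on $ATR_i$, enters a state of $Q_{R_i}$, and reads the requested bit off $VTR_i$; function and constant blocks are handled the same way (a single function value or a constant occupies $\lceil\log n\rceil$ consecutive bits, so one direct access yields the block and $M'$ selects the relevant bit). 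If $a>\hat n$ the same arithmetic detects it and $M'$ supplies $\triangleleft$. Each simulated step uses a bounded number of additions, multiplications, divisions and comparisons on $O(\log n)$-bit numbers, hence $O(\log^2 n)$ time with schoolbook algorithms; since $M$ makes $O(\log^k\hat n)=O(\log^k n)$ steps, $M'$ runs in time $O(\log^{k+2}n)$, polylogarithmic in $n$.

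\emph{From direct-access to random-access.} Given a direct-access machine $M'$ deciding $\mathcal C$ in time $O(\log^k n)$, I would build a random-access machine $M$ that first recovers $\hat n$ by binary search using the out-of-range response $\triangleleft$ (as in Example~\ref{ex:machine}), and then recovers $n$ by a second binary search over candidate domain sizes $m$, comparing the (monotone in $m$) value $m^{r_1}+\cdots+m^{r_p}+\cdots$ with $\hat n$; both searches take $O(\log\hat n)$ iterations of polylogarithmic cost. Then $M$ simulates $M'$, representing $M'$'s work-, address- and constant-tapes on its own work-tapes (the last constant-tape filled with $n$, the others with the constants of $\mA$, read directly from $\mathrm{bin}(\mA)$). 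Whenever $M'$ enters a state of $Q_{R_i}$, $M$ reads the tuple $(a_1,\ldots,a_{r_i})$ from the simulated $ATR_i$, forms the index $a_1 n^{r_i-1}+\cdots+a_{r_i}$, adds the offset at which $\mathrm{bin}(R_i^{\mA})$ begins, writes the resulting address on its own address-tape, reads the corresponding bit, and puts it on the simulated $VTR_i$; entries into states of $Q_{f_j}$ are treated identically, except that $M$ copies the $\lceil\log n\rceil$ bits encoding the function value onto the simulated $VTf_j$. Again the per-step overhead is a bounded number of polylogarithmic-time arithmetic operations, so $M$ runs in time polylogarithmic in $\hat n$.

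The main obstacle is entirely book-keeping: one has to verify that the index arithmetic — block offsets of a fixed-vocabulary encoding, the bijection between a linear index and a tuple of domain elements, recovery of $n$ from $\hat n$, out-of-range detection, and the handling of the virtual end-marks and of the $Q_{R_i}$, $Q_{f_j}$ state conventions of the direct-access model — can all be carried out within polylogarithmic time on $O(\log n)$-bit integers. Since each simulated step reduces to a constant number of such operations and $\log\hat n=\Theta(\log n)$, no super-polylogarithmic blow-up occurs, and the equivalence follows. (An analogous argument, arranged so that only $O(\log n)$-bit auxiliary quantities are ever stored, handles the space-bounded case and the equivalence with the standard Turing machine model mentioned above.)
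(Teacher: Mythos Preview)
The paper does not contain a proof of this theorem: it is quoted from \cite{FerrarottiGTBV19b,FerrarottiGTBV19} and used as a black box, so there is no ``paper's own proof'' to compare against. Your mutual-simulation argument is the standard and correct way to establish such an equivalence, and it matches what one finds in the cited references: the crucial observation is that $\hat n$ is polynomial in $n$ (so $\log\hat n=\Theta(\log n)$), and each step of one model can be simulated by the other with only polylogarithmic arithmetic on $O(\log n)$-bit addresses.

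Two small remarks. First, your claim that $\hat n\ge n$ relies on the vocabulary containing at least one relation or function symbol of positive arity; for ordered structures this is guaranteed (the order itself, or at least one symbol from $\sigma$, suffices), but it is worth making this hypothesis explicit rather than leaving it implicit in the parenthetical. Second, in the direct-access $\to$ random-access direction you should also deal with the constant-tapes $CT_1,\ldots,CT_l$ (not only $CT_{l+1}=n$): their contents are the constants $c_1^{\mA},\ldots,c_l^{\mA}$, and you do say these are read from $\mathrm{bin}(\mA)$, but the initialisation of all constant-tapes (and the translation of function queries into reading $\lceil\log n\rceil$ consecutive bits of the function's table) deserves one explicit sentence each. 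With those two clarifications your sketch is complete.
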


\section{The Complexity Class $\polylogspace$}\label{section:polylogspace}

Let $L(M)$ denote the class of structures of a given signature $\sigma$ accepted by
 a direct-access Turing machine $M$. We say that $L(M) \in \mathrm{DSPACE}[f(n)]$ if $M$ visits at
most $O(f(n))$ cells in each work-tape before accepting or rejecting an input
structure whose domain is of size $n$.
We define the class of all languages decidable by a deterministic
direct-access Turing machines in \emph{polylogarithmic space} as
follows:
\begin{equation*}
\polylogspace := \bigcup_{k \in \mathbb{N}}
\mathrm{DSPACE}[(\left\lceil\log n \right\rceil)^k].
\end{equation*}
Note that it is equivalent whether we define the class $\mathrm{PolylogSpace}$ by means of direct-access Turing machines or random-access Turing machines. Indeed, from Theorem~\ref{directrandom} and the fact that the (standard) binary encoding of a structure $\mA$ is of size polynomial with respect to the cardinality of its domain $A$, the following corollary is immediate.

\begin{corollary}[\cite{FerrarottiGTBV19b,FerrarottiGTBV19}]	\label{tba}
A class of finite ordered structures $\cal C$ of some fixed vocabulary $\sigma$ is decidable by a random-access Turing machine working in $\polylogspace$ with respect to $\hat{n}$,
where $\hat{n}$ is the size of the binary encoding of the input structure, iff $\cal C$ is decidable by a direct-access Turing machine in $\polylogspace$ with respect to $n$, where $n$ is the size of the domain of the input structure.
\end{corollary}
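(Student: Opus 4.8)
The plan is to transfer the time-bounded equivalence of Theorem~\ref{directrandom} to the space-bounded setting. The key quantitative fact is that the standard binary encoding $\mathrm{bin}(\mA)$ of an ordered $\sigma$-structure has length $\hat n = n^{r_1} + \cdots + n^{r_p} + O(\log n)$, which is polynomial in $n$ and is at least $n$ (the order relation alone contributes $n^2$ bits), so that $\log\hat n = \Theta(\log n)$. Hence $\bigcup_{k}\mathrm{DSPACE}[(\log\hat n)^k] = \bigcup_{k}\mathrm{DSPACE}[(\log n)^k]$, and it suffices to exhibit, in each direction, a simulation that preserves $\mathrm{PolylogSpace}$ up to this change of measure, exactly mirroring the simulations used to prove Theorem~\ref{directrandom} but bounding space rather than time.

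First I would treat the direction from direct-access to random-access. Given a direct-access machine $M$ of vocabulary $\sigma$ running in space $O((\log n)^k)$, the simulating random-access machine $M'$ receives $\mathrm{bin}(\mA)$ followed by $\triangleleft$ on its input tape. It first recovers $n$ and $\lceil\log n\rceil$ by binary search using the out-of-range response $\triangleleft$, exactly as in Example~\ref{ex:machine}, and computes the offsets inside $\mathrm{bin}(\mA)$ of the blocks encoding each $R_i$, $f_j$ and $c_j$ (these are sums of powers $n^{r_i}$, obtainable with $O(\log\hat n)$-bit arithmetic). $M'$ keeps on its work-tapes a copy of all of $M$'s work-, address-, value- and constant-tapes; their total length is $O((\log n)^k) + O(\log n) = O((\log\hat n)^k)$. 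Whenever $M$ enters a query state $q \in Q_{R_i}$ (resp. $Q_{f_j}$), $M'$ reads the tuple of domain elements on the corresponding address-tape, computes the bit position within $\mathrm{bin}(\mA)$ of the selected entry of $R_i^{\mA}$ (resp. of the bits of $f_j^{\mA}$ on that tuple), writes it on its random-access address-tape, reads the bit, and deposits the result on the simulated value-tape before applying $M$'s transition. Every index computation involves only $O(\log\hat n)$-bit numbers and therefore fits in an $O(\mathrm{polylog}\,\hat n)$-space scratchpad, so $M'$ works in $\mathrm{PolylogSpace}$ with respect to $\hat n$.

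For the converse, given a random-access machine $M$ running in space $O((\log\hat n)^k)$ on inputs $\mathrm{bin}(\mA)\triangleleft$, the simulating direct-access machine $M'$ receives $\mA$ directly, reads $n$ off a constant-tape $CT$, and first computes $\hat n = n^{r_1}+\cdots+n^{r_p}+O(\log n)$ from $n$ and the fixed vocabulary. It then simulates $M$ step by step, maintaining a copy of $M$'s work-tapes ($O((\log\hat n)^k) = O((\log n)^k)$ cells) and of $M$'s $\lceil\log\hat n\rceil$-bit input address-tape. When $M$ accesses its input at the position held on the address-tape, $M'$ determines which encoding block that position falls in and which tuple or argument it selects, writes the relevant domain elements on the appropriate address-tape $ATR$ or $ATf$, enters the matching query state $Q_{R_i}$ or $Q_{f_j}$ (or reads the appropriate $CT$ cell for a constant), and feeds the bit returned on the value-tape back into the simulation; out-of-range positions are handled by returning the bit encoding $\triangleleft$. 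All index computations here involve only $O(\log n)$-bit numbers, so $M'$ uses $O(\mathrm{polylog}\,n)$ space. Combining the two directions with $\log\hat n = \Theta(\log n)$ yields the equivalence.

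The only point requiring care — and the reason the corollary is not entirely immediate — is that the translations between the direct-access query format and the bit positions of $\mathrm{bin}(\mA)$ must be carried out within a \emph{polylogarithmic space} budget rather than merely polylogarithmic time. But these translations are additions, multiplications and comparisons of numbers with $O(\log n)$ bits, which are computable in $O(\mathrm{polylog}\,n)$ space (indeed in $O(\log n)$ space for the basic operations), so they cause no difficulty; this is precisely the same bookkeeping already performed in polylogarithmic time in the proof of Theorem~\ref{directrandom}. I expect this space-accounting of the address arithmetic to be the main — and only mild — obstacle.
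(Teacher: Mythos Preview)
Your proposal is correct and follows essentially the same approach as the paper, which simply declares the corollary ``immediate'' from Theorem~\ref{directrandom} together with the observation that the binary encoding of a structure has size polynomial in the domain size (hence $\log\hat n = \Theta(\log n)$). You spell out in detail the two mutual simulations and verify that the address arithmetic fits in polylogarithmic space, which is exactly the content the paper leaves implicit.
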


Moreover, in the context of $\polylogspace$, there is no need for random-access address-tape for the input; $\polylogspace$ defined with random-access Turing machines coincide with $\polylogspace$ defined with (standard) Turing machines that have sequential access to the input.

\begin{proposition}[\cite{FerrarottiGTBV19b,FerrarottiGTBV19}]
A class of finite ordered structures $\cal C$ of some fixed vocabulary $\sigma$ is decidable by a random-access machine working in $\polylogspace$ with respect to $\hat{n}$ iff $\cal C$ is decidable by a standard (sequential-access) Turing machine in $\polylogspace$ with respect to  $\hat{n}$, where $\hat{n}$ is the size of the binary encoding of the input structure.
\end{proposition}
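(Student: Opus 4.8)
The plan is to prove the two inclusions separately, both resting on a single observation: an address into an input of length $\hat n$ fits in $\lceil\log\hat n\rceil = O(\log\hat n)$ cells, which already lies within the polylogarithmic bound, so passing between a sequential input head and an explicit address costs only polylogarithmic extra space. Since in both machine models only the work-tapes are charged for space, neither translation below leaves $\polylogspace$.

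For the easy direction, suppose $\cal C$ is decided by a standard (sequential-access) Turing machine $M$ using $O((\log\hat n)^k)$ work-tape space. I would build a random-access machine $M'$ that simulates $M$ step by step while keeping on one of its work-tapes a binary counter $p$ of length $\lceil\log\hat n\rceil$ holding the current position of $M$'s sequential input head. Whenever $M$ reads the input, $M'$ copies $p$ onto its address-tape, reads the indicated input symbol, and resumes; whenever $M$ moves its input head, $M'$ increments or decrements $p$ accordingly. The only extra space is the $O(\log\hat n)$-bit counter, so $M'$ runs in $O((\log\hat n)^{\max(k,1)})$ space and decides $\cal C$; hence $\cal C$ is decidable by a random-access machine in $\polylogspace$.

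For the converse, suppose $\cal C$ is decided by a random-access machine $N$ in $O((\log\hat n)^k)$ work-tape space (its address-tape already has length $\lceil\log\hat n\rceil$). I would construct a standard Turing machine $N'$ that maintains verbatim copies of all of $N$'s work-tapes and of $N$'s address-tape. A step of $N$ that consults the input is simulated by $N'$ as follows: read the binary address $a$ off the simulated address-tape, then sweep the real (sequential) input head together with a dedicated $O(\log\hat n)$-bit work-tape counter from the left end of the input, incrementing the counter after each cell, until the counter equals $a$ or the endmark $\triangleleft$ is reached (handling the out-of-range case exactly as $N$ does); record the scanned symbol, rewind the input head, and continue. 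All other steps are copied directly. The one new work-tape gadget is the $O(\log\hat n)$-bit scanning counter, so $N'$ uses $O((\log\hat n)^{\max(k,1)})$ space and decides $\cal C$; thus $\cal C$ is decidable by a standard Turing machine in $\polylogspace$. (Running time degrades by a factor $O(\hat n)$ per input access, which is irrelevant for a space claim.)

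Together the two inclusions give the equivalence. There is no genuine obstacle; the only point requiring care is the bookkeeping that keeps the simulations inside $\polylogspace$ — concretely, that every auxiliary object (the position counter, the scanning counter, the copied address-tape) has length $O(\log\hat n) = (\log\hat n)^{O(1)}$, and that charging these to the work-tapes does not inflate the asymptotic space class, which holds because $\polylogspace$ is closed under adding a constant number of $O(\log\hat n)$-space tapes and under constant-factor space blow-ups.
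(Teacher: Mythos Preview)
Your argument is correct and is the standard one: an address into an input of length $\hat n$ occupies $O(\log\hat n)$ cells, so maintaining or simulating an address tape costs only polylogarithmic additional work space, which is absorbed by $\polylogspace$. Both directions are handled properly and the space accounting is sound.

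Note, however, that the paper does not supply its own proof of this proposition; it is stated as a cited result from~\cite{FerrarottiGTBV19b,FerrarottiGTBV19}. So there is no in-paper proof to compare against. Your simulation argument is exactly the expected one for this kind of equivalence (and is essentially the same observation underlying the cited references): since only work-tape space is charged, and since random access can be emulated by a sequential scan guided by an $O(\log\hat n)$-bit counter (and conversely a sequential head position can be tracked by such a counter), the two models coincide on $\polylogspace$.
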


\section{An Alternative (Uniform) Notion of Completeness.}{\label{sec:ucompl}}

Let us fix some necessary terminology. Let $\mathcal{M}$ be a countably infinite class of deterministic direct-access machines such that for every integer $k > 0$ there is exactly one direct-access machine $M^{k} \in \mathcal{M}$. We say that $\mathcal{M}$ is \textit{uniform}  if there is a deterministic Turing machine $M_{\mathcal{M}}$ which for every input $k \geq 0$ builds an encoding of the corresponding $M^{k} \in \mathcal{M}$. 
A \textit{structural language} or \textit{structural problem} is a countably infinite class of structures of a given finite signature which is closed under isomorphisms. Let $\sigma$ be a finite signature, $\mathrm{Str}[\sigma]$ denotes the class of all finite $\sigma$-structures.
Let $\mathcal{L}$ be a countably infinite class of structural languages of a same finite signature $\sigma$, we call $\mathcal{L}$ a \textit{problem family} or \textit{language family}. A problem family $\mathcal{L}$ is \textit{compatible} with a structural language $C$ if $\bigcup_{L_i \in {\cal L}} L_i = C$.

We proceed now to formally define the notion of uniform completeness discussed in the introduction. 

\begin{definition}\label{def:1}
We define uniform  decidability, reduction, hardness and completeness as follows:
\begin{itemize}
\item  Let $\mathcal{L}$ be a problem family and $\mathcal{M}$ be a uniform countably infinite class of deterministic direct-access machines.
$\mathcal{M}$ \textit{uniformly decides} $\mathcal{L}$ if for every $L_i \in {\cal L}$ there is an $M_j \in \mathcal{M}$ such that $M_j$ decides $L_i$.
\item Let $\mathcal{D}$ be a complexity class. A structural language $C$ \emph{is uniformly in $\mathcal{D}$ via a language family $\cal L$} if the following holds:
\begin{itemize}
\item $\mathcal{L}$ is compatible with $C$.
\item There is a uniform countably infinite class of deterministic direct-access machines $\cal M$ which uniformly decides $\mathcal{L}$.
\item Each machine in $\cal M$ belongs to $\cal D$.
\end{itemize}
\item  There is a \emph{uniform many-one} $\mathrm{P}$ \emph{reduction} from a structural language $L$ to a language family $\mathcal{L}$ (denoted $L \leq_{m}^{uP} \mathcal{L}$) if there is a $L_i \in \mathcal{L}$ and a deterministic transducer Turing machine $M_{L,L_i}$ in $\mathrm{P}$ which computes a function $f: \mathrm{Str}[\rho] \rightarrow \mathrm{Str}[\sigma]$ such that $\mA \in L$ iff $f(\mA) \in L_i$. Note that $M_{L,L_i}$ computes a classical Karp reduction $L \leq_{m}^{P} {L}_i$.
\item The structural language $C$ is \emph{uniformly hard for $\mathrm{PolylogSpace}$ under uniform many-one $\mathrm{P}$ reductions via a language family ${\cal L}$} if ${\cal L}$ is compatible with $C$ and $L_j \leq_{m}^{uP} \mathcal{L}$ holds for every structural language $L_j$ decidable in PolylogSpace. 
\item We say that $C$ is \emph{uniformly complete for $\mathrm{PolylogSpace}$ under uniform many-one $\mathrm{P}$ reductions via a language family ${\cal L}$} if it is uniformly hard for $\mathrm{PolylogSpace}$ under uniform many-one $\mathrm{P}$ reductions via $\cal L$ and further $C$ is uniformly in $\mathrm{PolylogSpace}$ via $\cal L$. 
\end{itemize}  
\end{definition}

In structural complexity, classical complete problems lead to some interesting consequences such as Corollary 3.19c in~\cite{BDG_95} which states that if a $\mathrm{PSPACE}$ complete problem under $\mathrm{P}$ (Karp) reductions is in $\mathrm{P}$, then $\mathrm{PSPACE} = \mathrm{P}$. The following lemma shows that our ``relaxed'' notion of uniform completeness still allow us to derive similar kind of results.

\begin{lemma}\label{PlspInP}
Let $C$ be uniformly complete for $\mathrm{PolylogSpace}$ under uniform many-one $\mathrm{P}$ reductions \emph{via} the problem family $\mathcal{L}$. 
If $C$ is also \textrm{uniformly} in $\mathrm{P}$ \emph{via} $\mathcal{L}$ then $\mathrm{PolylogSpace} \subseteq \mathrm{P}$.
\end{lemma}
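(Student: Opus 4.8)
\emph{Proof plan.} The plan is to unwind the definitions from Definition~\ref{def:1} and to observe that the claim reduces to the classical fact that $\mathrm{P}$ is closed under polynomial-time many-one reductions; the whole uniformity apparatus only serves to single out, for a given target language, one concrete reduction and one concrete $\mathrm{P}$ machine. First I would fix an arbitrary structural language $L$, say of signature $\rho$, that is decidable in $\mathrm{PolylogSpace}$, with the goal of showing $L \in \mathrm{P}$. Since $C$ is uniformly hard for $\mathrm{PolylogSpace}$ under uniform many-one $\mathrm{P}$ reductions via $\mathcal{L}$, Definition~\ref{def:1} gives a member $L_i \in \mathcal{L}$ and a deterministic polynomial-time transducer $M_{L,L_i}$ computing a function $f\colon \mathrm{Str}[\rho] \to \mathrm{Str}[\sigma]$ with $\mA \in L$ iff $f(\mA) \in L_i$; that is, $f$ is a genuine Karp reduction $L \leq_{m}^{P} L_i$.

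Next I would invoke the extra hypothesis. Because $C$ is \emph{uniformly in $\mathrm{P}$ via} $\mathcal{L}$, there is a uniform countably infinite class of deterministic direct-access machines that uniformly decides $\mathcal{L}$ with every machine running in polynomial time; in particular there is a single direct-access machine $N$ that decides exactly this $L_i$ in time polynomial in the size of its input structure. By Theorem~\ref{directrandom} (together with the observation, also used for Corollary~\ref{tba}, that the binary encoding of a structure has size polynomial in the cardinality of its domain), $N$ can be simulated by a standard Turing machine deciding $L_i$ in time polynomial in the length $\hat n$ of the binary encoding of its input.

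I would then compose the two machines: on input $\mathrm{bin}(\mA)$ for a $\rho$-structure $\mA$, first run $M_{L,L_i}$ to produce $\mathrm{bin}(f(\mA))$, then run (the simulation of) $N$ on it, accepting iff $N$ accepts. Since $M_{L,L_i}$ runs in polynomial time, the size of $f(\mA)$ is polynomial in $|\mA|$, so the second phase also runs in time polynomial in $|\mA|$; hence the whole procedure is polynomial time and decides $L$, giving $L \in \mathrm{P}$. As $L$ was an arbitrary language in $\mathrm{PolylogSpace}$, this yields $\mathrm{PolylogSpace} \subseteq \mathrm{P}$.

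There is no serious obstacle here; the only points requiring mild care are the bookkeeping of encoding and domain sizes across the reduction (so that ``polynomial in the reduct'' stays ``polynomial in the source'') and the passage between the direct-access and standard Turing machine models, which is already handled by Theorem~\ref{directrandom} and the surrounding remarks. I would also note explicitly that the ``family of problems, one per parameter'' flavour of uniform completeness does not bite in this argument, because a single reduction targets a single $L_i$ whose polynomial-time decider is extracted from the uniform family — no diagonalization across the whole family of machines is needed.
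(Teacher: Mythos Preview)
Your proposal is correct and follows essentially the same approach as the paper's proof: extract a Karp reduction to some $L_i\in\mathcal{L}$ from uniform hardness, extract a polynomial-time decider for that $L_i$ from ``uniformly in $\mathrm{P}$ via $\mathcal{L}$'', compose, and pass from the direct-access model to a standard Turing machine. The only minor imprecision is that Theorem~\ref{directrandom} as stated concerns $\polylog$ rather than $\mathrm{P}$; the paper handles this in the same spirit by appealing to the analogous simulation strategy, so your argument is fine.
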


\begin{proof} (Sketch)
Let $\mathcal{M}$ and $\mathcal{M}'$ be the classes of deterministic direct access machines that uniformly decide $\mathcal{L}$ witnessing the facts that $C$ is  uniformly in $\mathrm{PolylogSpace}$ and P, respectively.
Since we assume that $C$ is uniformly complete for $\mathrm{PolylogSpace}$ under uniform many-one $\mathrm{P}$ reductions \emph{via} the problem family $\mathcal{L}$, it follows from Definition~\ref{def:1} that for each structural language $L_i$ in $\mathrm{PolylogSpace}$
 there is a transducer Turing machine $M_{L_i,L_j} \in \mathrm{P}$ which reduces  $L_i$ to some $L_j$ in $\mathcal{L}$.
The fact that $C$ is uniformly in $\mathrm{P}$ implies by Definition~\ref{def:1} that there is a direct-access machines $M' \in {\cal M}'$  that decides $L_j$ in P. 
Then to decide $L_i$ we can build a deterministic direct-access machine $M''$ by assembling together $M_{L_i,L_j}$ and $M'$, redirecting the output of $M_{L_i,L_j}$ to a work tape and making $M'$ read its input from that work tape. As both machines are in $\mathrm{P}$, we get that $M''$ is also in $\mathrm{P}$. Moreover, we can construct a deterministic Turing machine that simulates the direct-access machine $M''$ and still works in $P$. That can be done using a strategy simmilar to the one  in the proofs of Theorem~1 and Proposition~1 in~\cite{FerrarottiGTBV19b}.
\qed
\end{proof}

The result in Lemma~\ref{PlspInP} should be interpreted in the light of the  following well known relationship between between deterministic space and time. \[\mathrm{PolylogSpace} \subseteq \mathrm{DTIME} \bigg(2^{\big(\lceil\log n \rceil^{O(1)}\big)}\bigg)\]
Note that this upper bound for PolylogSpace corresponds to the class known as Quasipolynomial Time (see~\cite{babai:stoc2016}).



\section{A (Uniform) Complete Problem for $\mathrm{PolylogSpace}$.}\label{S-plQSAT-C}

Our uniformly complete problem for PolylogSpace, namely the $\mathrm{QSAT}^{pl}$ problem, is inspiered by the well known PSPACE complete problem of satisfiability of  quantified Boolean sentences ($\mathrm{QSAT}$, aka $\mathrm{QBF}$ in~\cite{BDG_95}). Further, the strategy used in~\cite{BDG_95} to prove that $\mathrm{QSAT}$ is complete for $\mathrm{PSPACE}$ under $\mathrm{P}$ Karp reductions (see Theorems~3.29 and~2.27 as well as Lemmas~3.22, 3.27 and~3.28 in~\cite{BDG_95}) serve us as base for the corresponding strategy to prove that $\mathrm{QSAT}^{pl}$  is indeed uniformly complete for PolylogSpace.  

Let us then briefly recall the strategy used in~\cite{BDG_95} to prove that $\mathrm{QSAT}$ is complete for $\mathrm{PSPACE}$ under $\mathrm{P}$ Karp reductions.
Given a deterministic Turing machine $\mathrm{M_L}$ that decides a problem $L$ in $\mathrm{DSPACE}(n^{c})$ and an input string $x$ to $\mathrm{M_L}$, the strategy consists in building a quantified Boolean sentence $\mathrm{Accepted}_{M_L}(x)$ which is satisfiable only if the input string $x$ is accepted by $\mathrm{M_L}$. The formula is built by iterating $m = c_0 \cdot n^{c}$ times a sub-formula $\mathrm{Access}_{2^{m},{M_L}}(\alpha,\beta)$, which is true when $\alpha$ and $\beta$  are two vectors of $c_1 \cdot n^{c}$ Boolean variables which encode valid configurations of the  computation of $\mathrm{M_L}$ on input $x$, and such that the configuration $\beta$ is reachable from the configuration $\alpha$ in at most $2^{m}$ steps, where $c_0$ and $c_1$ are constants that depend on $\mathrm{M_L}$. Note that when the formula $\mathrm{Accepted}_{M_L}(x)$ is evaluated (by the Turing machine $\mathrm{M_{\mathrm{QBF}}}$ that decides $\mathrm{QBF}$) the sub-formula $\mathrm{Access}_{2^{j},{M_L}}(\alpha_j,\beta_j)$ needs to be evaluated  $2^m$ times, which is the maximum length of a computation  of  $\mathrm{M_L}$ on an input of length $n$.
 The number of alternations of quantifiers in  $\mathrm{Accepted}_{M_L}(x)$ is $2 \cdot m - 1$ and the number of Boolean variables is $(3 \cdot m + 2) \cdot (c_1 \cdot n^{c})$, which corresponds to $O(c_0 \cdot c_1 \cdot n^{2 \cdot c})$.


To evaluate the formula $\mathrm{Accepted}_{M_L}(x)$, $\mathrm{M_{\mathrm{QBF}}}$ uses a stack to implement the recursive execution of a function called $\mathrm{Eval}$. The depth of the stack
is essentially the number of quantifiers (i.e., of Boolean variables), plus the depth in the nesting of parenthesis of the quantifier free sub-formula (since also the logical connectives are evaluated with $\mathrm{Eval}$). In each entry, the stack records the configurations at the given stage and the truth value of the sub-formulas already evaluated. For that, the stack needs space polynomial in $n$.


In the case of $\mathrm{QSAT}^{pl}$, we make two main changes to the problem $\mathrm{QSAT}$. First we add a list of binary trees (represented as heaps, see below) as \emph{external constraints} to the input formula. Besides the classical connectives in $\{\vee, \wedge, \neg, \rightarrow\}$ we include a new \emph{constraint check} connective $\odot$. Second we restrict the number of quantifiers in the quantified Boolean sentences to be polylogarithmic in the size of the input. The model of computation also differs since we work with direct-access Turing machines. 

Having only polylogarithmically many quantifiers (and variables) essentially allows us to evaluate the quantified boolean sentences in polylogarithmic space instead of polynomial space. In our case the value of $m$ is $ c_0 \cdot \lceil\log n\rceil^{c}$ instead of $c_0 \cdot n^{c}$. We as well change slightly the strategy for the evaluation, requiring the input sentence to be in prenex normal form and using the function $\mathrm{Eval}$ only for the quantifiers.

We use external constraints as follows. During the construction of the formula $\mathrm{Accepted}_{M_L}(x)$ each generated instance of the sub-formula $\mathrm{Config}_{2^{j}}(\alpha_j)$ --needed to check that the vector of free Boolean variables $\alpha_j$ encodes a valid configuration of $\mathrm{M_L}$-- includes the connective $\odot$ listing the variables which represent the state of $\mathrm{M_L}$ in the corresponding configuration as well as the variables that correspond to the contents of the address- and value-tapes for all relations and functions in the input structure $\mA$. In the evaluation of $\mathrm{Accepted}_{M_L}(x)$, whenever the variables that encode the state in the configuration $\alpha_j$ correspond to a state in $\mathrm{M_L}$ where a particular relation or function is queried by the machine, the values assigned to those variables are checked against the heap that represents the corresponding relation or function in $\mA$


\begin{remark}\label{}
Note that the input to a direct-access Turing machine is not part of its configurations. The configuration of a direct-access machine only includes the size of the domain of the input structure $\mA$ (as the contents of $CT_{l+1}$) and the contents of the address- and value-tapes of all the relations and functions in $\mA$. In order to check whether a given tuple is in a given relation or to know the value of a certain function on a given tuple, we must first instruct the machine to write the tuple in the corresponding address tape and then to enter in the state in $Q_{r_i}$ or $Q_{f_i}$ for that particular relation or function, respectively. That is why we add the heaps as external constraints in the input to $\mathrm{QSAT}^{pl}$ as a way to represent the relations and functions of $\mA$.
It is note worthy that even using the classical Turing machine model to define $\mathrm{M_L}$ it is not possible to include the contents of the input tape in the configurations, since we would then need to use polynomial space in the machines which decide $\mathrm{QSAT}^{pl}$. This is so because the stack would then need polynomial space to be able to hold such configurations.
\end{remark}

\begin{definition}\label{def:2}

A \emph{quantified Boolean sentence with external constraints}, denoted as $\mathrm{QBF}^c$, is a word model\footnote{That is a model that encodes a finite string as defined in Section~6.2 in~\cite{EF95}} of signature \[\sigma^{qbf} = \{ \leq^{2}, Suc^{1 \rightarrow 1}, X^{1}, 1^{1}, 0^{1}, \exists^{1}, \forall^{1}, \wedge^{1}, \vee^{1} \neg^{1}, \Rightarrow^{1}, \odot^{1}, (^{1}, )^{1}, ,^{1}, *^{1}\}\] of the following form:

\begin{itemize}

\item \emph{Parameters:} If the $\mathrm{QBF}^c$ structure is the output of a uniform $\mathrm{P}$ reduction from a problem $L \in \mathrm{PolylogSpace}$, then there is a  list of $8$ binary strings separated by ``*''. They represent the constants $c$, $c_0$, $c_1$, $c_2$, $p$, $s$, $r_M$ and $e_M$ that depend on the direct-access machine $\mathrm{M_L}$ that decides $L$. Recall that $p$ and $s$ are the number of relation and function symbols in the input signature $\sigma$ of $\mathrm{M_L}$. $r_M$ and $e_M$ are the maximum arities among the relation and function symbols, respectively. Otherwise, these parameters do \emph{not} appear in the structure.

\item \emph{Formula:} A prenex quantified Boolean formula with connectives in the set $\{\vee, \wedge, \neg, \rightarrow\}$, no free-variables, no repetition of the variable indices in the quantifier prefix, no parenthesis in the quantifier prefix, and with the quantifier free part fully parenthesised. The variables are encoded as  $Xb_1\ldots b_l$, where   $b_1\ldots b_l$  is a binary string and $l$ is the minimum number of bits needed to enumerate all the variables in the formula. The formula may include the constraint check connective $\odot$ with the following syntax: if $\psi$ is a quantifier free Boolean formula with a set $X$ of free variables, then
$\big(\big(\psi\big) \odot \big((\bar{x}),(\bar{y}, \bar{z}),(\bar{v}, \bar{w})\big)\big)$,
where all variables in $\bar{x}$, $\bar{y}$, $\bar{z}$, $\bar{v}$ and $\bar{w}$ are in $X$, is a wff. The number of variables in $\bar{x}$  must be the same as the length of the binary strings in ``constraint control intervals'' (see next item). The remaining variables in $\odot$ are associated to ``External Constraints'' (see below). The semantics of $\odot$ is clarified in the proof of Lemma~\ref{PLQSC-k-inPlsp}.



\item \emph{Constraint control intervals:} An optional sequence of binary numbers which starts with ``*'' and finishes with ``**'', and where the numbers are separated by ``*''. The numbers in the sequence must appear in increasing order and must be of a same length. The sequence should include as many numbers as there are heaps, i.e., $p + s$ numbers (cf. with ``external constraints'').



\item \emph{External Constraints:} An optional sequence of full binary trees represented as heaps, i.e., in arrays following the order of a traversal of the trees by levels and from left to right.
    The heaps are separated by ``*'' and terminate with ``**''. The number of heaps must coincide with the number of binary numbers in the ``constraint control intervals''.
   As heaps represent full binary trees, their size must be $2^{m + 1} - 1$. This corresponds to a full binary tree of depth $m$. Heaps are related to the connective $\bigodot$ in the formula as follows.
The number of variables in $\bar{x}$ must be the same as the number of bits in each of the binary strings in ``constraint control intervals''.
 The first $p > 0$ heaps are of the same depth $d_p > 0$ and correspond to the variables $\bar{y}$. The remaining heaps (say $s > 0$) are of the same depth $d_s > 0$ and correspond to the variables $\bar{z}$ and $\bar{w}$. There are $p$ variables in $\bar{v}$. 
The number of variables in $\bar{y}$ must be $p \cdot d_p$. The number of variables in $\bar{z}$ and $\bar{w}$ must be, respectively, $s \cdot (d_s - h)$ and $s \cdot h$ for some integer $0 < h < d_s$.




\item \emph{Interdependency:} The ``external constraints'' are \emph{interdependent} with the ``constraint control intervals'' and the constraint check connective $\bigodot$ in the formula. Either the three of them are in the structure, or none of them are.


 \end{itemize}

\end{definition}

The uniformly complete PolylogSpace problem $\mathrm{QSAT}^{pl}$ is defined as follows.

\begin{definition}	\label{def:3}

Let $\mathrm{QSAT}^{pl}_{k}$ be the structural language formed by the set of finite structures $\mS$ of vocabulary $\sigma^{qbf}$ that are quantified Boolean sentences as per Definition~\ref{def:2} and that either satisfy property~(a) or~(b) below, where $\varphi$ is the formula encoded by the structure $\mS$ (i.e., a quantified Boolean sentence with optional external constraints), $\mathrm{Bvar}(\varphi)$ is the set of Boolean variables in $\varphi$ and $\hat{n}$ is the size of the domain of $\mS$.
\begin{enumerate}[a.]
\item $\mS$ has parameters $c$, $c_0$, $c_1$, $c_2$, $p$, $s$, $r_M$ and $e_M$ such that $k = (5 \cdot c \cdot c_0 \cdot c_1 \cdot c_2 \cdot p \cdot s \cdot r_M \cdot e_M)$ and has external constraints.  
\item $\mS$ has \emph{no} parameters, if present external constraints are satisfied, $\varphi$ is true and $|\mathrm{Bvar}(\varphi)|^{3} \leq \lceil\log \hat{n}\rceil^{k}$.
\end{enumerate}
The corresponding problem family $\mathcal{P}$ and problem language $\mathrm{QSAT}^{pl}$ are defined as $\{\mathrm{QSAT}^{pl}_{k}\}_{k \in \mathbb{N}}$ and $\bigcup_{k \in \mathbb{N}} \mathrm{QSAT}^{pl}_{k}$, respectively.

\end{definition}

Note that property (a) in Definition~\ref{def:3} corresponds to the case where the structure $\mS$ is the output of a uniform $\mathrm{P}$ reduction from a problem in $\mathrm{PolylogSpace}$. The listed parameters are described in the proof of Lemma~\ref{uHardPlsp}. Conversely, Property~(b) corresponds to the case where the structure $\mS$ is \emph{not}  the output of a uniform $\mathrm{P}$ reduction from a problem in $\mathrm{PolylogSpace}$.
In this case we have that $(|\mathrm{Bvar}(\varphi)|)^{3} \leq \lceil\log \hat{n}\rceil^k$  holds. This requirement could have instead been expressed in therms of the size of $\varphi$ or in terms of its quantifier free sub-formula, which could appear as more natural. We chose however to express it in terms of $\hat{n}$, i.e., in terms of the size of the domain of the structure, in order to make the use of external constraints optional for the general case.

\section{Uniform Completeness of the language $\mathrm{QSAT}^{pl}$.}\label{SS-plQSAT-C}

We first show that $\mathrm{QSAT}^{pl}$ is \emph{uniformly hard} for $\mathrm{PolylogSpace}$.

\begin{lemma}\label{uHardPlsp} 
Let the structural language $\mathrm{QSAT}^{pl}$ and the language family $\mathcal{P}$ be as in Definition~\ref{def:3}.
Then $\mathrm{QSAT}^{pl}$ is \emph{uniformly hard} for $\mathrm{PolylogSpace}$ under uniform many-one $\mathrm{P}$ reductions \emph{via} $\mathcal{P}$.
\end{lemma}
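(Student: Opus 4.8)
The plan is to lift the classical proof that $\mathrm{QSAT}$ is complete for $\mathrm{PSPACE}$ (Theorems~3.29 and~2.27 and Lemmas~3.22, 3.27, 3.28 in~\cite{BDG_95}) to the polylogarithmic setting, following the outline of Section~\ref{S-plQSAT-C}. Compatibility of $\mathcal{P}$ with $\mathrm{QSAT}^{pl}$ is immediate from Definition~\ref{def:3}, so it remains to exhibit, for every structural language $L$ decidable in $\polylogspace$, a uniform many-one $\mathrm{P}$ reduction $L \le_{m}^{uP} \mathcal{P}$. Fix such an $L$; by definition there is a deterministic direct-access Turing machine $\mathrm{M}_{L}$ over some signature $\rho$ with $p$ relation and $s$ function symbols of maximal arities $r_M$ and $e_M$, which decides $L$ in $\mathrm{DSPACE}(\lceil\log n\rceil^{c})$ with respect to the domain size $n$, for some constant $c$. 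A configuration of $\mathrm{M}_{L}$ on an input $\mA$ with $|A| = n$ records the state, the polylogarithmically long work-tape contents and head positions, and the contents of all address-, value- and constant-tapes; hence it is encodable by a block of $c_1 \cdot \lceil\log n\rceil^{c}$ Boolean variables for some $\mathrm{M}_{L}$-dependent constant $c_1$, and a halting computation halts within $2^{m}$ steps where $m = c_0 \cdot \lceil\log n\rceil^{c}$.

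The reduction $f$ is defined as follows. On input $\mA$ the transducer first computes $\lceil\log n\rceil$ and $m$, and then outputs the word model of a $\mathrm{QBF}^{c}$ structure $\mS = f(\mA)$ that carries: the parameters $c, c_0, c_1, c_2, p, s, r_M, e_M$ (so that $\mS$ is governed by case~(a) of Definition~\ref{def:3}); the prenex quantified Boolean sentence $\mathrm{Accepted}_{\mathrm{M}_{L}}(\mA)$ obtained, as in~\cite{BDG_95}, by unfolding the Savitch-style reachability predicate $\mathrm{Access}_{2^{j}}(\alpha,\beta)$ from $j=m$ down to $j=0$ (each level adding one fresh configuration block and a universal quantifier over a pair of blocks, with a \emph{single} recursive occurrence of $\mathrm{Access}_{2^{j-1}}$), where every leaf sub-formula $\mathrm{Config}_{2^{j}}(\alpha_j)$ checks that $\alpha_j$ encodes a legal configuration and additionally carries the connective $\odot$ whose argument lists name the variables encoding the state of $\mathrm{M}_{L}$ and the variables encoding the address- and value-tapes of the $p$ relations and $s$ functions; the external constraints, namely one full-binary-tree heap per relation $R_i^{\mA}$ and per function $f_j^{\mA}$, of the depths prescribed in Definition~\ref{def:2} and filled in from the actual contents of $\mA$; and the matching constraint-control intervals. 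Finally one sets $k = 5\cdot c\cdot c_0\cdot c_1\cdot c_2\cdot p\cdot s\cdot r_M\cdot e_M$, so that $\mathrm{QSAT}^{pl}_{k}$ is the intended target member of $\mathcal{P}$.

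It then remains to check correctness and polynomial-time computability. For correctness, by construction of $\mathrm{Accepted}_{\mathrm{M}_{L}}(\mA)$ and the semantics of the constraint-check connective (spelled out when proving Lemma~\ref{PLQSC-k-inPlsp}), the sentence is true with its external constraints satisfied precisely when $\mathrm{M}_{L}$ has an accepting computation on $\mA$: the $\odot$-checks against the heaps supply exactly the answers to the relation and function queries that are absent from direct-access configurations, so a satisfying assignment encodes a genuine $\le 2^{m}$-step accepting run from the initial to an accepting configuration. Hence $\mA \in L$ iff $f(\mA) \in \mathrm{QSAT}^{pl}_{k}$. For efficiency, the formula part has only polylogarithmically many variables and, because $\mathrm{Access}$ recurses only once, size polylogarithmic in $n$, whereas the heaps have size polynomial in $n$ (a relation of arity $r$ fits into a tree of size $O(n^{r})$); thus $|f(\mA)|$, and the time to produce it, are polynomial in $n$, i.e. $f$ is a $\mathrm{P}$ reduction witnessing $L \le_{m}^{P} \mathrm{QSAT}^{pl}_{k}$. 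Since $\mathrm{QSAT}^{pl}_{k} \in \mathcal{P}$, this yields $L \le_{m}^{uP} \mathcal{P}$, as required.

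The main obstacle is not the Savitch-style unfolding, which is routine, but faithfully transcribing an \emph{arbitrary} direct-access machine into Boolean constraints: because the input structure $\mA$ never appears inside a direct-access configuration, every consultation of a relation or function by $\mathrm{M}_{L}$ has to be routed through the $\odot$ connective and the heaps, and one must verify that the heap layout, the constraint-control intervals, the depths $d_p, d_s$, the split parameter $h$, and the variable blocks $\bar{x}, \bar{y}, \bar{z}, \bar{v}, \bar{w}$ named in each $\odot$ all line up, and that the induced value of $k$ is consistent with Definition~\ref{def:3}. Carrying out this bookkeeping carefully — essentially encoding the transition function of $\mathrm{M}_{L}$, together with the endmarks and the direct-access query discipline, into the quantifier-free matrix of the formula — is where the real work of the proof lies.
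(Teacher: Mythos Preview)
Your proposal is correct and follows essentially the same approach as the paper: both lift the Savitch/QSAT argument from \cite{BDG_95} to the polylogarithmic setting, route all input-structure queries of the direct-access machine through the $\odot$ connective and the heaps, and fix the same target index $k = 5\cdot c\cdot c_0\cdot c_1\cdot c_2\cdot p\cdot s\cdot r_M\cdot e_M$. The paper's proof differs only in that it spells out the explicit size bounds for each component of $f(\mA)$ (parameters, quantifier prefix, quantifier-free part, $\odot$ occurrences, constraint-control intervals, heaps) rather than summarising them as ``polylogarithmic formula plus polynomial heaps'', but the underlying argument is the same.
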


\begin{proof}
 By definition the language family $\mathcal{P}$ is compatible with the structural language $\mathrm{QSAT}^{pl}$.
 We need to prove that for each structural language $L \in \mathrm{PolylogSpace}$ of some signature $\sigma$, there is a uniform many-one $\mathrm{P}$ reduction from $L$ to $\mathcal{P}$.
 That is, we need to show that there is a language $\mathrm{QSAT}^{pl}_{k}$ in $\mathcal{P}$ and a transducer Turing machine  $M_{L,\mathrm{QSAT}^{pl}_{k}}$ which on input $\mA$ of signature $\sigma$ builds in $P$ time a $\sigma^{qbf}$ structure $f(\mA)$ such that $\mA \in L \Leftrightarrow f(\mA) \in \mathrm{QSAT}^{pl}_{k}$.


Let $L$ be a structural language in $\mathrm{PolylogSpace}$, decided by the direct-access machine $\mathrm{M_L}$ in $\mathrm{DSPACE}(\lceil\log n\rceil^{c})$.
Let $\mA$ be an input structure to $\mathrm{M_L}$ of signature $\sigma = \{R^{r_1}_1, \ldots, R^{r_p}_p, f^{e_1}_1, \ldots, f^{e_s}_s, c_1,
\ldots, c\}$ and size $n$. Let $\varphi$ denote the prenex quantified Boolean sentence encoded in $f(\mA)$ and $\phi$ denote its quantifier-free part.
%
We build a Turing machine $M_{L,\mathrm{QSAT}^{pl}_{k}}$ which computes the reduction from $L$ to $\mathrm{QSAT}^{pl}_{k}$ in $\mathrm{P}$ time.

The fact that
$\mA \in L  \Leftrightarrow \varphi$ is true is straightforward regarding the connectives $\{\vee, \wedge, \neg, \rightarrow\}$ since for the construction of $\varphi$  we follow essentially the same strategy as in~\cite{BDG_95}.

We need to show however that the address- and value-tapes of the relations and functions that appear in all the encoded ``query state'' configurations agree with the actual relations and functions in the input structure $\mA$. We also need to show that the $\sigma^{qbf}$ structure $f(\mA)$ satisfies the conditions in Definition~\ref{def:3} and can be built in polynomial time.

\paragraph*{1: Parameters.\,}
Recall that $c$ is the exponent in the space bound of $\mathrm{M_L}$.
We denote as $c_0$  the constant from the expression $(2^{c_0 \cdot s(n)})$ which gives an upper bound for the number of different configurations in a Turing machine with space bound $s(n)$ (see Theorem 3.29 and proof of Lemma 2.25 in~\cite{BDG_95}).
We denote as $c_1$ the number of bits needed to encode in binary each symbol in the alphabet of $\mathrm{M_L}$. For the address and value tapes we use the alphabet $\{0, 1, \diamond,\sqcup\}$, where $\diamond$ is used to indicate the position of the tapes head, and $\sqcup$ is the blank (as in the value tapes out of the corresponding query states). In the work tapes we might have a bigger alphabet.
We denote as $c_2$  the constant exponent of the polynomial which bounds the size (and the time for their construction) of each instance of the sub-formulas $\mathrm{Config}$, $\mathrm{Next}$, $\mathrm{Equal}$, $\mathrm{Initial}$, and $\mathrm{Accepts}$ that are used to build the formula $\mathrm{Accepted}_{M_L}(x)$ (which we call $\varphi$ here). See the explanation above and Lemma 3.22 in~\cite{BDG_95}.
$p$ and $s$ are the number of relation and function symbols in $\sigma$, and $r_M$ and $e_M$ are the maximum arities of the relation and function symbols there.
The \emph{sizes} of all those parameters in $f(\mA)$ is $O(1)$.


\paragraph{2: Quantifier prefix of $\varphi$.\,}
Upon inspecting the construction of the formula $\mathrm{Accepted}_{M_L}(x)$ (called $\varphi$ here) in Theorem 3.29, and Lemma 3.28 in~\cite{BDG_95}, we note that there are $(3\cdot c_0 \cdot  \lceil\log n\rceil^{c} + 2)$ vectors of Boolean variables, named as $\alpha_i, \beta_i, \gamma_i$, each one representing a configuration of $\mathrm{M_L}$, and hence having $O(c_1 \cdot  \lceil\log n\rceil^{c})$ Boolean variables.
Then we have a total of less than $(c_0 \cdot c_1 \cdot \lceil\log n\rceil^{2 \cdot c})$ Boolean variables in $\varphi$.
Considering the symbols in $\sigma^{qbf}$ (see Definition 2) needed to encode that amount of variables with their quantifiers, including the number of bits needed to encode the index for each such variable, we have that the size of the quantifier prefix of $\varphi$ in $f(\mA)$ is less than $(c \cdot c_0 \cdot c_1 \cdot \lceil\log n\rceil^{2 \cdot c + 1})$.  


\paragraph{3: Quantifier free sub-formula of $\varphi$.\,}
The size of each occurrence of each of the sub-formulas $\mathrm{Config}$, $\mathrm{Next}$, $\mathrm{Equal}$, $\mathrm{Initial}$, and $\mathrm{Accepts}$ (see above) is less than  $(2 \cdot c_1 \cdot \lceil\log n\rceil^{c})^{c_2}$, i. e., the maximum number of Boolean variables in each one of those sub-formulas raised to the exponent $c_2$ (see 1, above). The number of occurrences of those sub-formulas in the formula $\mathrm{Accepted}_{M_L}(x)$, is $(2 + 5 \cdot c_0 \cdot \lceil\log n\rceil^{c} + 4)$.
To the product of the two last expressions we must add the approximate number of occurrences of parenthesis of $O(2 \cdot c_0 \cdot \lceil\log n\rceil^{c})$, and of connectives $\{\wedge, \vee, \neg, \Rightarrow\}$, $O(2 \cdot c_0 \cdot \lceil\log n\rceil^{c})$.
So that the size of the quantifier free sub-formula, \emph{before adding} the connectives for the constraints $\odot$ is less than
$(c_0 \cdot c_1^{c_2} \cdot \lceil\log n\rceil^{c \cdot (c_2 + 1)})$.


\paragraph{Connective $\odot$ in $\varphi$.\,}
As we said above, when we build $\mathrm{Accepted}_{M_L}(x)$ (which we call $\varphi$ in the general $\sigma^{qbf}$ structure), every time that we generate an instance of the sub-formula $\mathrm{Config}_{2^{j}}(\alpha_j)$, we use the connective $\odot$ to list the variables which represent the state of $\mathrm{M_L}$ in that configuration, and also the variables that correspond to the contents of the address and value tapes for \emph{all} the relations and functions in the input structure $\mA$ to $\mathrm{M_L}$.
At every point of the computation where we would write a quantifier free sub-formula  $\psi$ which is \emph{an instance}  of the sub-formula $\mathrm{Config}_{2^{j}}(\alpha_j)$, we would write it in $\varphi$, instead,  as follows
%

\vspace{.2cm}
    $\bigg(\big(\psi\big) \odot \big((x_1,\ldots, x_l),(y_{111},\ldots, y_{11h},\ldots, y_{1r_11},\ldots, y_{1r_1h}, \ldots, y_{p11},\ldots, y_{p1h},$

\vspace{.2cm}

\noindent    $\ldots, y_{pr_p1},\ldots, y_{pr_ph}, z_{111},\ldots, z_{11h},\ldots, z_{1e_11},\ldots, z_{1e_1h},\ldots, z_{s11},$

\vspace{.2cm}

$\ldots, z_{s1h},\ldots, z_{se_s1},\ldots, z_{se_sh}),(v_{1},\ldots, v_{p}, w_{11},\ldots,w_{1h},\ldots, w_{s1},\ldots, w_{sh})\big)\bigg)$
$h = \lceil\log n\rceil$,
 $y_{ijk}$ is the $k$-th bit of the $j$-th component of a candidate tuple for the $i$-th relation in $\mA$,
 and similarly for $z_{ijk}$, regarding functions in $\mA$.
 $v_{i}$ is the answer ($0$ or $1$) to a query about the existence of the candidate tuple encoded in the corresponding variables $y_{ijk}$, in the $i$-th relation in   $\mA$,
and $w_{ij}$
is the $j$-th bit of the value of  the $i$-th function in $\mA$, for the tuple encoded in the corresponding sequence of variables $z$.



%
The variables in $\bar{x}$ encode the state, so that there are $\lceil\log |Q_L|\rceil$ such variables.
The variables in $\bar{y}$ and $\bar{z}$ encode the contents of $ATR$ and $ATf$ for all relation and function symbols $R$ and $f$, respectively in  $\sigma$.
We need less than $((p \cdot r_M \cdot \lceil\log n\rceil) + (s \cdot e_M \cdot \lceil\log n\rceil))$ of those variables.
Similarly, the variables in $\bar{v}$ and $\bar{w}$ encode the contents of $VTR$ and $VTf$. We need $(p + s \cdot \lceil\log n\rceil)$ of those variables.

There are in  $\phi$ $\;(6 + c_0 \cdot \lceil\log n\rceil^{c})$ occurrences of the sub-formula $\mathrm{Config}_{2^{j}}(\alpha_j)$ or other sub-formulas which include it and hence we have to add the $\odot$ connective  to them.
Then, considering that each variable  can be encoded in $(1 + c \cdot \lceil\log \log n\rceil)$ symbols in $\varphi$, the total size of the quantifier free sub-formula, \emph{including} the connectives for the constraints $\odot$ in $f(\mA)$ is less than
$(c \cdot c_0 \cdot c_1^{c_2}  \cdot p \cdot s \cdot r_M \cdot e_M \cdot \lceil\log n\rceil^{c \cdot (c_2 + 2) + 2})$.



\paragraph{4: Constraint control intervals.\,}
These are $p + s$ ordered binary numbers separated by ``*'', and terminating with ``**'' (see Definition 2). They are used in connection with the connective $\odot$ in $\varphi$ and the heaps to check the constraints, which as we said above in this case, where the $\sigma^{qbf}$ structure is the output of a reduction, they are used to check that the values given arbitrarily by different valuations to the contents of the address and value tapes of the relations and functions in the input structure $\mA$ in the corresponding configuration, match the corresponding relations or functions.
Recall that there are also $p + s$  heaps. During the evaluation of the formula  $\varphi$, in each occurrence of the connective $\odot$, the binary number $b$ formed by the Boolean values assigned by the current valuation to the variables which represent the state of $\mathrm{M_L}$ ($\bar{x}$, see 3, above) is checked against the constraint control intervals, so that the relative position of the first number which is greater or equal than $b$ indicates the relative position of the heap against which the address  (variables $\bar{y}$ and $\bar{z}$) and value tapes (variables $\bar{v}$ and $\bar{w}$) of the corresponding relation or function (whose relative position is also indicated by $b$) will be checked. That is,
the constraint control intervals represent the maximum values of the subsets of states of $\mathrm{M_L}$ in the order  $Q_{R_1}$, $\ldots$, $Q_{R_p}$, $Q_{f_1}$, $\ldots$, $Q_{f_s}$. Hence the size of the constraint control intervals in $f(\mA)$ is $O(1)$.



\paragraph{5: External Constraints. \;}
We explain how we build the external constraints and the connective $\bigodot$ in $\phi$.
Note that in the encoding of a configuration in the free variables of $\psi$, while for each cell in the address and value tapes of the relations and functions in $\mA$, we may  need to use more than one Boolean variable, in the particular case of the connective $\odot$ we only use one variable instead. This is because in these specific tapes, when the state of $\mathrm{M_L}$ is in the subset, say, $Q_{R_1}$ we may \emph{only} have the symbols $0$ or $1$ in each cell, and on the other hand when a $\sigma^{qbf}$ structure is not the output of a reduction, it makes more sense to match each Boolean variable in the right argument of the connective $\bigodot$ with one bit in  the paths from the roots in the heaps,  to their leaves  (see below). For that matter we encode the symbols in the alphabet of $\mathrm{M_L}$ in such a way that the rightmost bit is $0$ for the symbol $0$, and $1$ for the symbol $1$, and we encode that bit in the corresponding single variable.


\paragraph{5.1: Heaps.\,}
 We build $p + s$ heaps in the structure, that is, one for each relation symbol and one for each function symbol in the input $\mA$.
     The sizes of the heaps that represent the relations in $\mA$ are $2\cdot (2^{\lceil\log n\rceil})^{r_1} - 1$,..., $2\cdot (2^{\lceil\log n\rceil})^{r_p} - 1$, for $R_1$,..., $R_p$, respectively. And the sizes of the heaps that represent the functions are $2\cdot (2^{\lceil\log n\rceil})^{e_1 + 1} - 1$,..., $2\cdot (2^{\lceil\log n\rceil})^{e_s + 1} - 1$, for $f_1$,..., $f_s$, respectively.
The addition of $1$ to the arities of the functions in the exponents
 is due to the fact that each function $f_i$ is represented as a $(e_i + 1)$-ary relation in its heap. We take the first $e_i$ components from its address tape (variables $\bar{z}$ in $\odot$, see 3, above) and the  $(e_i + 1)$-th component from its value tape (variables $\bar{w}$).

In all the heaps the only cells whose contents are meaningful are those corresponding to the leaves of the trees, i.e., the last level of each tree.
For every relation and function, say relation $R_i$, each such cell corresponds to a single path of length  $(r_i \cdot \lceil\log n\rceil)$ and that path is determined by choosing at each level $(r_i \cdot \lceil\log n\rceil) \geq l \geq 0$, the left child if the $l$-th bit is $0$, and the right child if it is $1$. Recall that in a heap, if the cell number at level $l$ is $j\geq 1$, then the cell number of the left child is $2 \cdot j$, and that of the right child is $2 \cdot j + 1$.
    %
Note that we are representing all the trees as full, even if they may not be. However, the paths which correspond to leaves in a heap that are not present in the tree that would represent the actual relation, will never be used, since they correspond to tuples where some components have values greater than $n - 1$ (recall that the domain of $\mA$ is $\{0,\ldots, n - 1\}$).

Considering the sizes given above, the total size of the heaps
is less than $((p \cdot (2 \cdot (2^{\lceil\log n\rceil})^{r_M} - 1)) +
(s \cdot (2 \cdot (2^{\lceil\log n\rceil})^{e_M + 1} - 1)) + p + s)$.
Note that $2^{\lceil\log n\rceil}$ is $O(n)$, and it is easy to see that the constant multiplying $n$ is very small, since $2^{\lceil\log n\rceil} - 2^{\log n} < 2$.

So that we can say that the size of the External Constraints in $f(\mA)$ is less than $((p \cdot (2 \cdot n^{r_M} - 1)) +
(s \cdot (2 \cdot n^{e_M + 1} - 1)) + p + s)$.


\paragraph{6: Total size and Space bound.\;} As in 1, Definition 3, we denote as $\hat{n}$ the size of the domain of the  $\sigma^{qbf}$ structure $f(\mA)$.
Considering the sizes of the parts of $f(\mA)$ in 1 to 5, above, we have that

\vspace{.15cm}

\noindent $\hat{n} \leq$

\vspace{.1cm}

$[O(1)] + [(c \cdot c_0 \cdot c_1 \cdot \lceil\log n\rceil^{2 \cdot c + 1})] +$

\vspace{.1cm}

          $[(c \cdot c_0 \cdot c_1^{c_2}  \cdot p \cdot s \cdot r_M \cdot e_M \cdot \lceil\log n\rceil^{c \cdot (c_2 + 2) + 2})]
           + [O(1)]  +$

\vspace{.1cm}

           $[((p \cdot (2 \cdot n^{r_M} - 1)) +
(s \cdot (2 \cdot n^{e_M + 1} - 1)) + p + s)]$.

\vspace{.2cm}

Clearly the most significant term in the expression above is the fifth.
We have then that
$\hat{n} \leq  O \big((2 \cdot p + 2 \cdot s) \cdot   n^{\mathrm{Max} \{r_M, e_M + 1\}}\big)$, so that the structure $f(\mA)$ can be built in polynomial time.

On the other hand, it is clear from 3, 4 and 5 above  that each configuration encoded in a sequence of variables in the formula $\varphi$ (i.e., the sequences denoted as $\alpha_j$ in the instances of the sub-formulas $\mathrm{Config}_{2^{j}}(\alpha_j)$),  are considered as valid \emph{only}   if  the contents of the address and value tapes of the relations and functions in $\mA$ that appear in it \emph{agree} with the real relations and functions of the input $\mA$, in the appropriate states $Q_{R_1}$, $\ldots$, $Q_{R_p}$, $Q_{f_1}$, $\ldots$, $Q_{f_s}$ in the configuration.

Also, clearly $f(\mA)$ satisfies the conditions of Definition 3, making $k = (5 \cdot c \cdot c_0 \cdot c_1 \cdot c_2 \cdot p
\cdot s \cdot r_M \cdot e_M)$.

Note that giving that value to $k$ is what warranties us that effectively \emph{there is one particular} $k$ s.t. there is a Karp reduction from $L$ to $\mathrm{QSAT}^{pl}_k$. More precisely, the reduction is to
$\mathrm{QSAT}^{pl}_{(5 \cdot c \cdot c_0 \cdot c_1 \cdot c_2 \cdot p
\cdot s \cdot r_M \cdot e_M)}$  (see ``Discussion on the Parameter $k$'', below).    
\qed
\end{proof}

\vspace{.5cm}

Next we need to show that $\mathrm{QSAT}^{pl}$ is uniformly in PolylogSpace. We start by showing that each structural language $\mathrm{QSAT}^{pl}_{k}$ can be decided by a corresponding direct-access machine $M_k$ with space bounded by $\lceil\log \hat{n}\rceil^{k}$.


\begin{lemma}\label{PLQSC-k-inPlsp} 
%
%
For every $k > 0$, the structural language $\mathrm{QSAT}^{pl}_{k}$ from Definition~\ref{def:3} is in $\mathrm{DSPACE}(\lceil\log \hat{n}\rceil^{k})$, where $\hat{n}$ is the size of the domain of the input structure.

\end{lemma}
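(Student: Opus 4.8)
The plan is to exhibit, for each fixed $k\ge 1$, a deterministic direct-access Turing machine $M_k$ that takes $\sigma^{qbf}$-structures as input and decides $\mathrm{QSAT}^{pl}_{k}$ in space $O(\lceil\log \hat n\rceil^{k})$, where $\hat n$ is the size of the domain of the input structure $\mS$. On input $\mS$, the machine first checks, scanning the word through the direct-access tapes for the unary symbols of $\sigma^{qbf}$ and the successor function, that $\mS$ is a well-formed quantified Boolean sentence with external constraints in the sense of Definition~\ref{def:2}: the prenex shape and absence of free variables, balanced parentheses in the quantifier-free part, the form $Xb_1\cdots b_l$ of the variable names and the non-repetition of their indices in the prefix, the syntax of every $\odot$-occurrence, the equal-length and increasing-order condition on the constraint-control intervals, the ``$2^{m+1}-1$'' shape of each heap, and the interdependency clause. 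Each of these is a counting or bounded-arithmetic check that needs only a constant number of $O(\log\hat n)$-bit pointers and counters, hence runs in $O(\log\hat n)$ space; if any of them fails, $M_k$ rejects.

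Next $M_k$ splits according to whether the parameter block is present. If it is, we aim at clause~(a) of Definition~\ref{def:3}: $M_k$ reads $c,c_0,c_1,c_2,p,s,r_M,e_M$ (rejecting at once if any of them would exceed the fixed constant $k$, so that all of them are $O(1)$-bit numbers), accepts iff $5\cdot c\cdot c_0\cdot c_1\cdot c_2\cdot p\cdot s\cdot r_M\cdot e_M = k$ and external constraints are present, and otherwise rejects; note that no evaluation of $\varphi$ is involved here, which is exactly what keeps the space within budget even though a clause-(a) structure may contain up to $\hat n$ quantifiers. If instead no parameter block is present we aim at clause~(b): $M_k$ computes $|\mathrm{Bvar}(\varphi)|$ by counting the quantifiers in the prefix (each variable being quantified exactly once), forms the two $O(\log\hat n)$-bit numbers $|\mathrm{Bvar}(\varphi)|^{3}$ and $\lceil\log \hat n\rceil^{k}$ in $O(\log\hat n)$ space, and rejects immediately if $|\mathrm{Bvar}(\varphi)|^{3} > \lceil\log \hat n\rceil^{k}$. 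Performing this ``gate'' before anything costly is essential: from here on we may assume $|\mathrm{Bvar}(\varphi)|\le\lceil\log \hat n\rceil^{k/3}$, i.e.\ the sentence has only polylogarithmically many quantifiers.

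It remains to evaluate the prenex sentence $\varphi$ together with its external constraints. Following the strategy of~\cite{BDG_95} restricted to the quantifier prefix, $M_k$ runs the recursive procedure $\mathrm{Eval}$ over the prefix: it opens one stack frame per quantifier, recording the truth value currently being tried for that variable (the variable itself being determined by the depth) and the Boolean outcome accumulated so far, and at each leaf of this recursion, i.e.\ for each complete valuation $V$, it evaluates the fully parenthesised quantifier-free kernel $\phi$ under $V$. The stack has depth at most $|\mathrm{Bvar}(\varphi)|$ and total size $O(|\mathrm{Bvar}(\varphi)|^{2})$; evaluating $\phi$ is an instance of the Boolean formula value problem on a balanced expression of length at most $\hat n$, solvable in $O(\log\hat n)$ work space, the value of any variable being read off from the stack. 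For the connective $\odot$ I would fix, making the last sentence of Definition~\ref{def:2} precise, the following local semantics: under $V$, a subformula $\big((\psi)\odot\big((\bar x),(\bar y,\bar z),(\bar v,\bar w)\big)\big)$ evaluates to the truth value of $\psi$ under $V$ conjoined with the outcome of the heap check that reads the value of $\bar x$ under $V$ as a binary number $b$, locates the position $t$ of the least constraint-control interval that is $\ge b$, and verifies in the $t$-th heap that the root-to-leaf path spelled by the value of $\bar y$ under $V$ (respectively $\bar z$, for a function heap) ends at a leaf equal to the value of $\bar v$ (respectively $\bar w$); ``the external constraints are satisfied'' then means exactly that $\varphi$ is true under this reading of $\odot$. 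A heap occupies at most $\hat n$ positions, so it has depth $O(\log\hat n)$ and each cell index along a path fits in $O(\log\hat n)$ bits, whence a single constraint check costs only $O(\log\hat n)$ extra work space. Altogether $M_k$ uses space $O(|\mathrm{Bvar}(\varphi)|^{2}+\log\hat n)$, which by the gate is $O(\lceil\log\hat n\rceil^{2k/3}+\log\hat n)\subseteq O(\lceil\log\hat n\rceil^{k})$ for every $k\ge 1$; correctness is then immediate from the definition of $\mathrm{QSAT}^{pl}_{k}$ and the stated semantics of $\odot$.

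I expect the main obstacle to be the third paragraph: fixing a workable semantics for $\odot$ together with the constraint-control-interval and heap apparatus, and verifying that the heap navigation is correct and stays within $O(\log\hat n)$ auxiliary space so that it does not interfere with the polylogarithmic budget. The accounting for the $\mathrm{Eval}$ stack is routine once one observes that the bound ``$|\mathrm{Bvar}(\varphi)|^{3}\le\lceil\log\hat n\rceil^{k}$'' is exactly what turns the polynomial-space evaluation of ordinary $\mathrm{QSAT}$ into a polylogarithmic-space one; this is the whole reason the number of quantifiers is restricted in Definition~\ref{def:3}.
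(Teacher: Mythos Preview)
Your proposal is essentially correct and follows the same core strategy as the paper: recursive $\mathrm{Eval}$ over the quantifier prefix using a stack, evaluation of the fully parenthesised quantifier-free matrix in $O(\log\hat n)$ work space (the paper cites Buss's ALOGTIME result for this), and heap navigation for $\odot$ with exactly the semantics you spell out in your third paragraph. The paper organises the stack slightly differently---each frame records the \emph{entire} prefix seen so far together with the current values $v_1,v_2$, giving total size roughly $|\mathrm{Bvar}(\varphi)|^{2}\cdot\log|\mathrm{Bvar}(\varphi)|$ rather than your $O(|\mathrm{Bvar}(\varphi)|^{2})$---but both sit comfortably under the $|\mathrm{Bvar}(\varphi)|^{3}\le\lceil\log\hat n\rceil^{k}$ cap, which the paper invokes for the same reason you do.

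The one substantive difference is your explicit case-(a)/case-(b) split. The paper's proof never separates the two cases: it describes only the evaluation procedure and then, under ``Space Considerations'', argues that the $|\mathrm{Bvar}|^{3}$ bound of clause~(b) dominates the stack, with an ``also in this case'' that tacitly assumes clause-(a) instances are those produced by the reduction of Lemma~\ref{uHardPlsp} (whose stack bound is computed separately in the ``Discussion on the Parameter $k$''). Your reading---that clause~(a) as literally written demands neither that $\varphi$ be true nor that the quantifier count be bounded, so $M_k$ may accept on matching parameters without ever touching $\varphi$---is what actually makes the space bound go through for \emph{arbitrary} clause-(a) inputs, and is a point the paper leaves implicit (and arguably in tension with the surrounding text). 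This is an ambiguity in Definition~\ref{def:3} rather than a defect in your argument; given the definition as stated, your treatment of case~(a) is the clean one.
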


\begin{proof}
Let $k > 0$, let $\mS$ be a structure of the signature $\sigma^{qbf}$, let $\varphi$ be the prenex quantified Boolean formula in $\mS$ and let $\phi$ be its quantifier free sub-formula. We build a direct-access machine $M_k$ that decides $\mS \in \mathrm{QSAT}^{pl}_{k}$ working in $\mathrm{DSPACE}(\lceil\log \hat{n}\rceil^{k})$.


\paragraph*{1: Evaluation of the Formula.\,}
As in~\cite{BDG_95}, we use a function $\mathrm{Eval}$ to evaluate recursively the quantifiers in the prefix of $\varphi$. On reading the $i$-th quantifier in the quantifier prefix of the formula, $Q_ix_i$, for some $i > 0$, we call the same function $\mathrm{Eval}$ twice, to evaluate the sub-formula starting in the next quantifier, $Q_{i+1}$  with the current valuation of the preceding variables $x_1,\dots, x_{i-1}$, and with the Boolean values $0$ (False), and $1$ (True),  for $x_i$, then evaluating the disjunction or the conjunction of the returned truth values of the two calls, depending on $Q_i$ being $\exists$ or $\forall$, respectively, and finally returning the result.

Every time that we reach the last quantifier in the prefix, we have a full valuation for the variables in     the quantifier free sub-formula $\phi$. Then we evaluate it in $\mathrm{DLOGSPACE}$ as in~\cite{Buss87} (there the algorithm for the evaluation of the so called ``formulas in the wide sense'' --i.e. Boolean formulas with variables and a value assignment-- works in $\mathrm{ALOGTIME}$, which is known to be in $\mathrm{DLOGSPACE}$ --see Theorem 2.32 in~\cite{Immerman99} among other sources--). For the evaluation of the $\bigodot$ connective, we proceed as explained below, in ``Evaluation of the $\bigodot$ Connective''.

Every time that we must evaluate $\phi$, we read it from the input, using the address tapes and value tapes of the function $Suc$, and  the unary relations $X$, $0$, $1$, etc., in $\mS$, to find the next element in the formula, and then to know its corresponding symbol. We read the current valuation from the stack (see below).

To implement the function $\mathrm{Eval}$ we use a stack. In each entry we record the whole quantifier prefix up to the current one, with the following format for each quantifier: $Q_iX\bar{b},v_1,v_2,$ where  $\bar{b}$ is the index of the variable in binary ($i$), and $v_1$ is the truth value currently assigned to the variable $X\bar{b}$.
 As to $v_2$, it works as follows. When   the sub-formula that follows starting with $Q_{i+1}$, say $\delta_{i+1}$, is evaluated with the value $0$ for $x_i$, $v_2$ is a blank. When we get the truth value $w_1$ of the result of the evaluation of $\delta_{i+1}$ with $x_i = 0$, we change $v_2$ to the value $w_1$. At that point, we change $v_1$ of $x_i$ to $1$, and call $\mathrm{Eval}$ to evaluate $\delta_{i+1}$ again.
 Then, when we get the truth value $w_2$ of the result of the evaluation of $\delta_{i+1}$ with $x_i = 1$, we change $v_2$ to the disjunction or conjunction of the previous value of $v_2$ with the value $w_2$, depending on $Q_i$ being $\exists$ or $\forall$, respectively. That value is the one which will be returned after the call of $\mathrm{Eval}$ for the sub-formula which starts with $Q_iX\bar{b}$, and follows with $\delta_{i+1}$.

The \emph{depth} of the stack is  the number of Boolean variables in $\varphi$ (denoted as $|\mathrm{Bvar}(\varphi)|$, as in Definition 3), which is less than $(c_0 \cdot c_1 \cdot \lceil\log n\rceil^{2 \cdot c})$ (see 2, in the proof of Lemma~\ref{uHardPlsp}).
Given the explanation above, the size of \emph{each entry} is $|\mathrm{Bvar}(\varphi)| \cdot (7 + \log (|\mathrm{Bvar}(\varphi)|)$, which is less than
$(c_0 \cdot c_1 \cdot c \cdot \lceil\log n\rceil^{2 \cdot c + 1})$.

So that the \emph{total size} of the stack is less than $(c_0^{2} \cdot c_1^{2} \cdot c \cdot \lceil\log n\rceil^{4 \cdot c + 1})$.



\paragraph*{2: Evaluation of the $\bigodot$ Connective. \,}
Consider the expression for the $\bigodot$ Connective as in the  proof of Lemma~\ref{uHardPlsp}.
 Let $v$ be a  valuation for the quantifier free sub-formula $\phi$, let $\bar{x} = x_1,\ldots, x_l$, and let $v(x_1),\ldots, v(x_l)$  be the binary number   $b_1, \ldots, b_l$. We compare $b_1, \ldots, b_l$ with the ordered binary numbers in the ``constraint control intervals'' in $\mS$, until we find the first of them greater or equal to it. Suppose that that number is the $i$-th number in the list.
 If $i \leq p$, we check whether the valuation of the $i$-th subsequence of $d_p$ variables in $\bar{y}$ corresponds to the path in the $i$-th heap (see Definition~\ref{def:2}, and the explanation in ``heaps'', in the proof of Lemma~\ref{uHardPlsp}), that ends in a leaf which has the symbol $1$ if $v(v_i) = 1$, or $0$ if $v(v_i) = 0$.
 If $i = p + j$, for some $j > 0$, we check whether the valuation of the $j$-th subsequence of $d_s - h$ variables in $\bar{z}$ (where the number of variables in $\bar{w}$ is $s \cdot h$), followed by
 the valuation of the $j$-th subsequence of $h$ variables in $\bar{w}$, corresponds to the path in the $(p + j)$-th heap, that ends in a leaf which has the symbol $1$.
 If the check is correct, we evaluate $\big(\big(\psi\big) \odot \big((\bar{x}),(\bar{y}, \bar{z}),(\bar{v}, \bar{w})\big)\big)$ as $\psi \wedge \mathrm{True}$, otherwise we evaluate it as $\psi \wedge \mathrm{False}$.

This process can clearly be performed in a space smaller than the size of the stack, $(c_0^{2} \cdot c_1^{2} \cdot c \cdot \lceil\log n\rceil^{4 \cdot c + 1})$ (see above).



%
\paragraph*{3: Space Considerations. \,}
Note that the size of the stack is what determines the upper bound in the space needed by $M_k$.


Recall that in the case where the $\mathrm{QBF}^c$ structure is \emph{not} the output of a uniform $\mathrm{P}$ reduction from any problem in $\mathrm{PolylogSpace}$, according Definition~\ref{def:3},
each fragment $\mathrm{QSAT}^{pl}_{k}$ is the fragment of $\mathrm{QSAT}^{pl}$ where the proportion   $(|\mathrm{Bvar}(\varphi)|)^{3} \leq \lceil\log \hat{n}\rceil^k$  holds. Note that by our analysis in the last two paragraphs in~1 above, $(|\mathrm{Bvar}(\varphi)|)^{3}$ is 
 greater than the stack size, 
 so that also in this case the space will be enough to hold the stack.                      
\qed



%
\end{proof}






\subsubsection*{Discussion on the Parameter $k$}

When  $\mS$ \emph{is} the output of a uniform $\mathrm{P}$ reduction,
we must be sure that only \emph{one value} of $k$ will be enough to \emph{check all inputs} to  $M_{L}$. This is to comply with Definition~\ref{def:1}: there must be at least one language $\mathrm{QSAT}^{pl}_{k}$ in the family $\mathcal{P}$
   and a transducer Turing machine $M_{L,\mathrm{QSAT}^{pl}_{k}}$ which computes the Karp reduction $L \leq^{P}_{m} \mathrm{QSAT}^{pl}_{k}$.
That is why in Definition~\ref{def:3} we required that
   $k = (5 \cdot c \cdot c_0 \cdot c_1 \cdot c_2 \cdot p \cdot s \cdot r_M \cdot e_M)$ (see 1, in the proof of Lemma~\ref{uHardPlsp}, for the explanation of each parameter).
   Note that as all those parameters \emph{depend} on $L$ or  $M_{L,\mathrm{QSAT}^{pl}_{k}}$, and are fixed for all inputs to $L$,  in this way we can fulfil the definition of uniform many-one P reduction.

   Recall from the proof of Lemma~\ref{PLQSC-k-inPlsp} above, that the amount of space which the direct-access machine $M_k$ needs to decide the fragment $\mathrm{QSAT}^{pl}_{k}$ is less than $\big(c_0^{2} \cdot c_1^{2} \cdot c \cdot \lceil\log n\rceil^{4 \cdot c + 1}\big)$.
   We want to be sure that that amount of space is enough for \emph{all} values of $n$. Recall that the size of the input to $M_k$ is denoted by $\hat{n}$, and the input to $M_L$ by $n$  (that input which is transformed by $M_{L,\mathrm{QSAT}^{pl}_{k}}$ to a $\mathrm{QBF}^c$ structure of size polynomial in $n$, and the relationship between the two sizes is given by the expression in 6, in the proof of Lemma ~\ref{uHardPlsp}).
   Considering that expression,
   this means that the following relationship must hold for all values of $n$:

\vspace{.3cm}

   $\big(c_0^{2} \cdot c_1^{2} \cdot c \cdot \lceil\log n\rceil^{4 \cdot c + 1}\big) \; <  \; \big(\lceil\log \hat{n}\rceil\big)^{(5 \cdot c \cdot c_0 \cdot c_1 \cdot c_2 \cdot p \cdot s \cdot r_M \cdot e_M)}$

\vspace{.35cm}

\noindent As we saw in that lemma, the most significant term in the upper bound for $\hat{n}$ is $[((p \cdot (2 \cdot n^{r_M} - 1)) +
(s \cdot (2 \cdot n^{e_M + 1} - 1)) + p + s)]$, so that replacing it in the relationship above, we have

\vspace{.3cm}

$\big(c_0^{2} \cdot c_1^{2} \cdot c \cdot \lceil\log n\rceil^{4 \cdot c + 1}\big) \; <  \;$

\vspace{.2cm}

 $\bigg(\big\lceil\log \big((p \cdot (2 \cdot n^{r_M} - 1)) +
(s \cdot (2 \cdot n^{e_M + 1} - 1)) + p + s\big)  \big\rceil\bigg)^{(5 \cdot c \cdot c_0 \cdot c_1 \cdot c_2 \cdot p \cdot s \cdot r_M \cdot e_M)}$

\vspace{.35cm}

\noindent   which is, roughly, equivalent to

\vspace{.3cm}

$(c_0^{2} \cdot c_1^{2} \cdot c \cdot \lceil\log n\rceil^{4 \cdot c + 1}) \; < \;\bigg(\mathrm{Max \{r_M, e_M + 1\}} \cdot \lceil\log n\rceil\bigg)^{(5 \cdot c \cdot c_0 \cdot c_1 \cdot c_2 \cdot p \cdot s \cdot r_M \cdot e_M)}$

 \vspace{.35cm}

\noindent Clearly   that relationship holds for all values of $n$.

  Note that we chose to include in the value of $k$ \emph{all} the constants that affect in some way the value of $\hat{n}$, to have a safe upper bound (see 6, in the proof of Lemma~\ref{uHardPlsp}).
  In that choice, as also in the other calculations of sizes in this article, we have been using bounds which are not tight, and which most likely could be diminished with a more detailed analysis. However, our goal is to prove that for each $L \in \mathrm{PolylogSpace}$ there is a fragment $\mathrm{QSAT}^{pl}_{k}$ to which $L$ can be uniformly reduced, and for which there is a direct-access machine which can decide it in  the appropriate space bound, and for that matter our calculations suffice.

     With that exponent in the bound for the space in $M_k$ we are sure that \emph{all} the inputs to
   $M_{L,\mathrm{QSAT}^{pl}_{k}}$ will be reduced to structures which will be evaluated by  $M_k$, and the \emph{only reason} why a given input to $M_{L}$ can be reduced to an input to $M_k$ which \emph{is not} in the language $\mathrm{QSAT}^{pl}_{k}$ may  be because it \emph{is not} in the language $L$ either.

The following lemma completes the last part of the puzzle needed to show that $\mathrm{QSAT}^{pl}_{k}$ is indeed uniformly complete for PolylogSpace.




\begin{lemma}\label{uni-P-inPlsp} 
%

%
Let $\mathcal{M} = \{M_k\}_{k \in \mathbb{N}}$ be a countably infinite class of deterministic direct-access machines, where $M_k$ is the direct-access machine described in the proof of Lemma~\ref{PLQSC-k-inPlsp} that decides whether $\mathrm{QSAT}^{pl}_{k} \in \mathrm{DSPACE}(\lceil\log \hat{n}\rceil^{k})$.
Then the following holds:
\begin{enumerate}[a.]

\item $\mathcal{M}$ is uniform.

\item The structural language $\mathrm{QSAT}^{pl}$ is \emph{uniformly} in $\mathrm{PolylogSpace}$ \emph{via} the language family
$\mathcal{P}$.

\end{enumerate}
\end{lemma}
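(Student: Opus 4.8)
The plan is to derive the lemma directly from the construction of the machines $M_k$ carried out in the proof of Lemma~\ref{PLQSC-k-inPlsp}, by observing that this construction is \emph{uniform} in the parameter $k$. For part~(a), the key point is that each direct-access machine $M_k$ is obtained from a single fixed ``template'' direct-access machine $M_\ast$ by hardcoding the numeral for $k$ into finitely many subroutines. Concretely, $k$ enters only in two places: the routine that, on an input structure $\mS$ carrying parameters $c,c_0,c_1,c_2,p,s,r_M,e_M$, compares the product $5\cdot c\cdot c_0\cdot c_1\cdot c_2\cdot p\cdot s\cdot r_M\cdot e_M$ against $k$ (clause~(a) of Definition~\ref{def:3}); and the routine that, on an input structure without parameters, tests whether $|\mathrm{Bvar}(\varphi)|^{3}\le\lceil\log\hat n\rceil^{k}$ (clause~(b) of Definition~\ref{def:3}). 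Everything else --- the $\mathrm{Eval}$ recursion over the quantifier prefix following~\cite{BDG_95}, the stack discipline recording partial valuations and intermediate truth values, the $\mathrm{DLOGSPACE}$ evaluation of the quantifier-free matrix in the style of~\cite{Buss87}, and the evaluation of the $\odot$ connective against the heaps and the constraint control intervals --- is literally the same program for every $k$. Hence the map $k\mapsto\langle M_k\rangle$ is computed by a deterministic transducer $M_{\mathcal M}$ that reads $k$ and writes out $M_\ast$ with the numeral $k$ spliced into those two subroutines; this transducer can moreover be taken to run in polynomial time. That establishes~(a).

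For part~(b) the plan is to verify the three conditions in the definition of ``$C$ is uniformly in $\mathcal D$ via $\mathcal L$'' from Definition~\ref{def:1}, instantiated with $C=\mathrm{QSAT}^{pl}$, $\mathcal L=\mathcal P$ and $\mathcal D=\mathrm{PolylogSpace}$. First, $\mathcal P$ is compatible with $\mathrm{QSAT}^{pl}$, since by Definition~\ref{def:3} we have $\bigcup_{k\in\mathbb N}\mathrm{QSAT}^{pl}_k=\mathrm{QSAT}^{pl}$. Second, $\mathcal M=\{M_k\}_{k\in\mathbb N}$ uniformly decides $\mathcal P$: it is uniform by part~(a), and for every member $\mathrm{QSAT}^{pl}_k$ of $\mathcal P$ the member $M_k$ of $\mathcal M$ decides it, which is exactly Lemma~\ref{PLQSC-k-inPlsp}. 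Third, every machine of $\mathcal M$ lies in $\mathrm{PolylogSpace}$: Lemma~\ref{PLQSC-k-inPlsp} gives $M_k\in\mathrm{DSPACE}(\lceil\log\hat n\rceil^{k})$, and $\mathrm{DSPACE}(\lceil\log\hat n\rceil^{k})\subseteq\bigcup_{j\in\mathbb N}\mathrm{DSPACE}(\lceil\log n\rceil^{j})=\mathrm{PolylogSpace}$. The conjunction of these three facts is precisely the assertion that $\mathrm{QSAT}^{pl}$ is uniformly in $\mathrm{PolylogSpace}$ via $\mathcal P$, which is~(b).

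The genuine content of the lemma resides in the earlier results (Lemmas~\ref{uHardPlsp} and~\ref{PLQSC-k-inPlsp}); what remains is the bookkeeping connecting them to Definition~\ref{def:1}, and the main point of care --- the only place where I expect any subtlety --- is part~(a). One must make sure that the dependence of $M_k$ on $k$ is \emph{only} through a hardcoded numeral, and not through anything the machine must discover at run time or that would force behaviour non-uniform in $k$. Here the crucial remarks are that a direct-access (or ordinary) Turing machine never needs to ``know'' its own space bound, so the exponent $k$ need not appear anywhere except in the explicit syntactic test of clause~(b) of Definition~\ref{def:3}; and that the space analysis in the proof of Lemma~\ref{PLQSC-k-inPlsp} shows that once this test is passed the stack used by $\mathrm{Eval}$ genuinely fits within $\lceil\log\hat n\rceil^{k}$ cells, so no space-overflow detection (which would itself be non-uniform in $k$) is needed. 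Once these remarks are made explicit, both parts of the lemma follow at once.
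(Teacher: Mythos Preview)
Your proof is correct and, for part~(b), essentially identical to the paper's: both verify the three clauses of Definition~\ref{def:1} by invoking compatibility from Definition~\ref{def:3}, decidability from Lemma~\ref{PLQSC-k-inPlsp}, and the inclusion $\mathrm{DSPACE}(\lceil\log\hat n\rceil^{k})\subseteq\mathrm{PolylogSpace}$.

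For part~(a) the two arguments diverge in one implementation detail. You argue that no run-time space control is needed: once the syntactic test $|\mathrm{Bvar}(\varphi)|^{3}\le\lceil\log\hat n\rceil^{k}$ passes, the space analysis of Lemma~\ref{PLQSC-k-inPlsp} guarantees the stack fits, so $k$ enters $M_k$ only through the two hardcoded comparisons corresponding to clauses~(a) and~(b) of Definition~\ref{def:3}. The paper instead builds explicit space clocking into $M_k$: at start-up it writes $k$ on a dedicated work tape, uses it to mark cell $\lceil\log\hat n\rceil^{k}$ on every other work tape (via space constructibility), and rejects whenever a head hits that mark; all other transitions are identical across $k$. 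Your approach is leaner and relies on the a~priori bound; the paper's is more defensive and makes the space bound self-enforcing. Both are valid routes to uniformity. One small correction: your parenthetical remark that space-overflow detection ``would itself be non-uniform in $k$'' is not right --- the paper's construction shows precisely how to clock space uniformly in $k$ by storing $k$ on a tape rather than in the finite control. This does not affect the correctness of your argument, since you establish that clocking is unnecessary anyway.
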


\begin{proof}
We prove (a.) by build a transducer Turing machine $M_{\mathcal{M}}$ which reads as input an integer $k > 0$ and builds in its output tape an encoding of a Turing machine $M_k$ that decides whether $\mathrm{QSAT}^{pl}_{k} \in \mathrm{DSPACE}(\lceil\log \hat{n}\rceil^{k})$. 


The machine $M_k$ built by $M_{\mathcal{M}}$ works exactly as in the description in the proof of Lemma~\ref{PLQSC-k-inPlsp} except for the following added details:

\begin{itemize}

\item[a.1] At the beginning of every computation $M_k$ works in the following way:
 (i) it writes in its work tape $WT_1$ the value of $k$ in binary, and leaves the head pointing to the first cell;
(ii) it reads the size of the input structure $\hat{n}$ in its constant tape $CT_{l+1}$;
(iii) it writes $0$ in exactly the first $k \cdot \lceil\log \hat{n}\rceil$ cells in the work tape $WT_2$; 
(iv) in every work tape, except $WT_1$ and $WT_2$, it counts $\lceil\log \hat{n}\rceil^{k}$ cells and then in the next cell writes ``*'' (note that $\lceil\log \hat{n}\rceil^{k}$ is space constructible for any $k$); for the count it uses $WT_2$, counting in base $\lceil\log \hat{n}\rceil$, and then writes $0$ in all those cells.

\item[a.2] During its computation on any input, $M_k$ works as follows: (i) it does not write \emph{any} other data in $WT_1$, i.e., the sole purpose of that tape is to hold the value of the parameter $k$;
(ii) it uses $WT_2$ only as a counter up to $\lceil\log \hat{n}\rceil^{k}$;
(iii) whenever the machine needs to execute a loop whose bound is any function of $k$, it reads $k$ from $WT_1$, i.e., the value of $k$ is \emph{not} hard-wired in the finite control of $M_k$;
(iv) it clocks its use of work space using the marks written in the work tapes in 1.4 above: if in any $WT_i$, with $i > 2$, $M_k$ reads ``*'', it stops rejecting.

\end{itemize}

%

Note that the transition function of each $M_k$, except for the part described in a.1 above, is the \emph{same} for all values of $k$ and hard-wired in the finite control of $M_{\mathcal{M}}$.

%
Regarding part~(b), we get from~(a) and from the assumption in the lemma that $\mathcal{M}$ uniformly decides $\mathcal{P}$. By Definition~\ref{def:3}, the language family $\mathcal{P}$ is clearly compatible with the structural language $\mathrm{QSAT}^{pl}$. By Lemma~\ref{PLQSC-k-inPlsp}, each structural language $\mathrm{QSAT}^{pl}_{k}$ in $\mathcal{P}$ is in $\mathrm{DSPACE}(\lceil\log \hat{n}\rceil^{k})$, and hence in $\mathrm{PolylogSpace}$.
 
 Note that the uniformity of the language family $\mathcal{P}$ assures us that \emph{for every} $k > 0$, there is at least one direct-access machine  which decides $\mathrm{QSAT}^{pl}_{k}$ in $\mathrm{DSPACE}(\lceil\log \hat{n}\rceil^{k})$, and we can build the encoding of that machine.

So that the structural language $\mathrm{QSAT}^{pl}$ is uniformly in $\mathrm{PolylogSpace}$ via the language family $\mathcal{P}$.
 %
\qed



%
\end{proof}


From Lemmas~\ref{uHardPlsp} and~\ref{uni-P-inPlsp}, the following result is immediate.


\begin{theorem}
Let the structural language $\mathrm{QSAT}^{pl}$ and the language family $\mathcal{P}$ be defined as Definition~\ref{def:3}.
Then $\mathrm{QSAT}^{pl}$ is uniformly complete for $\mathrm{PolylogSpace}$ under uniform many-one $\mathrm{P}$ reductions \emph{via} $\mathcal{P}$.
\end{theorem}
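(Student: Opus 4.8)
By Definition~\ref{def:1}, to establish that $\mathrm{QSAT}^{pl}$ is uniformly complete for $\mathrm{PolylogSpace}$ under uniform many-one $\mathrm{P}$ reductions \emph{via} $\mathcal{P}$ it suffices to verify the two defining clauses: first, that $\mathrm{QSAT}^{pl}$ is uniformly \emph{hard} for $\mathrm{PolylogSpace}$ \emph{via} $\mathcal{P}$, and second, that $\mathrm{QSAT}^{pl}$ is uniformly \emph{in} $\mathrm{PolylogSpace}$ \emph{via} $\mathcal{P}$ (compatibility of $\mathcal{P}$ with $\mathrm{QSAT}^{pl}$ is built into Definition~\ref{def:3}). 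These are precisely Lemma~\ref{uHardPlsp} and part~(b) of Lemma~\ref{uni-P-inPlsp}, so the plan is to prove those two lemmas and then combine them; below I sketch the approach for each.

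For uniform hardness I would fix an arbitrary structural language $L\in\mathrm{PolylogSpace}$ decided by a direct-access machine $\mathrm{M_L}$ in $\mathrm{DSPACE}(\lceil\log n\rceil^{c})$ and adapt the classical Stockmeyer--Meyer construction, as presented in~\cite{BDG_95}, that reduces a $\mathrm{PSPACE}$ computation to $\mathrm{QBF}$. Given an input $\sigma$-structure $\mA$ one builds a prenex quantified Boolean sentence by iterating $m=c_0\cdot\lceil\log n\rceil^{c}$ times the reachability sub-formula $\mathrm{Access}_{2^{j}}$ on configuration-vectors of $O(c_1\cdot\lceil\log n\rceil^{c})$ Boolean variables; the crucial difference from the polynomial-space case is that $m$, and hence the total number of Boolean variables, is now polylogarithmic in $n$. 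The second difference handles the input: since the input of a direct-access machine is \emph{not} part of its configuration, the contents of the relations and functions of $\mA$ cannot be carried in the Boolean variables without destroying the space bound, so instead one appends to $f(\mA)$ the list of heaps of Definition~\ref{def:2} encoding $R_1^{\mA},\ldots,R_p^{\mA},f_1^{\mA},\ldots,f_s^{\mA}$ together with the constraint control intervals, and attaches the connective $\odot$ to every occurrence of $\mathrm{Config}_{2^{j}}(\alpha_j)$, listing the state-variables and the address/value-tape variables so that, whenever $\alpha_j$ encodes a query-state configuration, the assignment to those variables is forced to be consistent with the corresponding heap. One then checks that (i) $\mA\in L$ iff $f(\mA)\in\mathrm{QSAT}^{pl}_{k}$ for $k=5\cdot c\cdot c_0\cdot c_1\cdot c_2\cdot p\cdot s\cdot r_M\cdot e_M$, (ii) $f(\mA)$ satisfies clause~(a) of Definition~\ref{def:3}, and (iii) $f$ is computable in polynomial time, the dominant contribution to the domain size $\hat n$ of $f(\mA)$ being the heaps, which total $O\big((p+s)\,n^{\max\{r_M,e_M+1\}}\big)$.

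For the membership side I would, for each $k>0$, describe a direct-access machine $M_k$ deciding $\mathrm{QSAT}^{pl}_{k}$ within $\mathrm{DSPACE}(\lceil\log\hat n\rceil^{k})$: it evaluates the quantifier prefix of the formula by a recursive $\mathrm{Eval}$ procedure implemented with an explicit stack of depth $|\mathrm{Bvar}(\varphi)|$ (each entry recording the prefix up to the current quantifier together with partial truth values), evaluates the fully parenthesised quantifier-free part in $\mathrm{DLOGSPACE}$ in the style of~\cite{Buss87}, and evaluates each $\odot$ by walking the relevant root-to-leaf path in the appropriate heap against the constraint control intervals; the stack size is polynomial in $|\mathrm{Bvar}(\varphi)|$, which is at most $\lceil\log\hat n\rceil^{k}$ by clause~(b) of Definition~\ref{def:3} in the non-reduction case and by the choice of $k$ in the reduction case (this is Lemma~\ref{PLQSC-k-inPlsp}). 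Finally one exhibits a transducer Turing machine $M_{\mathcal{M}}$ that on input $k$ outputs an encoding of $M_k$: the transition table is common to all $k$ except a short initialisation block that writes $k$ on a dedicated work tape, sets up a base-$\lceil\log\hat n\rceil$ counter and space-clock marks on the remaining work tapes, and arranges that every $k$-dependent loop reads $k$ from the tape rather than from the finite control, which shows that $\mathcal{M}=\{M_k\}$ is uniform. The main obstacle throughout is the arithmetic of the parameter $k$: one must guarantee that a \emph{single} value of $k$, depending only on $\mathrm{M_L}$ and not on $n$, simultaneously serves as the target $\mathrm{QSAT}^{pl}_{k}$ of the Karp reduction for \emph{all} inputs, and that the associated space budget $\lceil\log\hat n\rceil^{k}$ suffices for the stack of $M_k$ for all $n$, i.e.\ that $c_0^{2}\cdot c_1^{2}\cdot c\cdot\lceil\log n\rceil^{4c+1}<(\lceil\log\hat n\rceil)^{k}$ holds for every $n$ with $k=5\cdot c\cdot c_0\cdot c_1\cdot c_2\cdot p\cdot s\cdot r_M\cdot e_M$ and $\hat n=O\big((p+s)\,n^{\max\{r_M,e_M+1\}}\big)$. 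Once Lemmas~\ref{uHardPlsp} and~\ref{uni-P-inPlsp} are established in this way, the theorem is immediate.
\qed
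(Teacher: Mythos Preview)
Your proposal is correct and follows essentially the same approach as the paper: the theorem is stated there as immediate from Lemmas~\ref{uHardPlsp} and~\ref{uni-P-inPlsp}, and your sketches of those two lemmas (the adapted Stockmeyer--Meyer/\cite{BDG_95} reduction with heaps and $\odot$ for hardness, the stack-based $\mathrm{Eval}$ with Buss-style evaluation of the quantifier-free part for membership, and the uniform generator $M_{\mathcal{M}}$ writing $k$ on a dedicated tape) mirror the paper's own arguments, including the choice $k=5\cdot c\cdot c_0\cdot c_1\cdot c_2\cdot p\cdot s\cdot r_M\cdot e_M$ and the verification that the stack bound $c_0^{2}c_1^{2}c\lceil\log n\rceil^{4c+1}$ fits within $\lceil\log\hat n\rceil^{k}$.
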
\label{}  
%

%

\section{Concluding Remarks} \label{sec:schluss}

In the first part of the paper we have seen that in the classical sense and under Karp reductions none of the classes DPolylogTime, NPolylogTime, $\polysigma{m}$ and $\polypi{m}$ ($m \in \mathbb{N}$) has a complete problem.
 This result follows from the existence of proper hierarchies inside each of the classes. The proof that such hierarchies exist is constructive by defining concrete problems parameterized by $k \in \mathbb{N}$ for each class. 
We expect that these results can be taken further towards an investigation of the strictness of the polylogarithmic time hierarchy as such. We also expect that similar strict hierarchies can be defined in terms of subsets of formulae in the fragments $\sigmaplog{m}$ and $\piplog{m}$ of the restricted second-order logic capturing the polylogarithmic time hierachy. Notice that the latter does not follow directly from the strict hierarchies proven in this paper, since in the proofs of the characterization results for the polylogarithmic-time hierarchy~\cite{FerrarottiGST18,FerrarottiGST19}, there is \emph{no} exact correspondence between the exponents in the polylogarithmic functions that bound the time complexity of the machines and the exponents in the restricted second-order variables of the $\soplog$ formulae that define the machines.

The second and final part of the paper explores an alternative notion of completeness for PolylogSpace which is inspired by the concept of uniformity from circuit complexity theory. We were then able to prove that we can still isolate the most difficult problems inside PolylogSpace and draw some of the usual interesting conclusions entailed by the  classical notion of complete problem (see in particular Lemma~\ref{PlspInP} and its corresponding discussion). This is relevant since it is well known since a long time that PolylogSpace has no complete problem in the usual sense. It is clear that this new concept of (uniform) completeness can be applied to all polylogarithmic time complexity classes considered in this paper. The identification of corresponding uniformly complete problems for each of those classes is left for future work.


%
%

\bibliographystyle{spmpsci}      
\bibliography{biblio}   

\end{document}